\newcommand{\bm}[1]{\mbox{\boldmath$#1$}}
\newcommand{\be}{\begin{equation}}
\newcommand{\ee}{\end{equation}}
\newcommand{\bea}{\begin{eqnarray}}
\newcommand{\eea}{\end{eqnarray}}
\newcommand{\non}{\nonumber}
\newtheorem{df}{Definition}
\newtheorem{th1}[df]{Theorem}
\newtheorem{lem}[df]{Lemma}
\newtheorem{conj}[df]{Conjecture}
\newtheorem{prop}[df]{Proposition}
\newtheorem{cor}[df]{Corollary}
\newcommand{\al}{\alpha}
\begin{document}
\title{Extension of a Borel subalgebra symmetry 
into the $sl_2$ loop algebra symmetry 
for the twisted XXZ spin chain  
at roots of unity and the Onsager algebra   
\footnote{Talk given at the workshop RAQIS, LAPTH, 
Annecy, France, September 11-14, 2007}} 
\author{Tetsuo  Deguchi}
\vskip 24pt 

\date{}
\maketitle

\begin{center}  
Department of Physics, Ochanomizu University \\
2-1-1 Ohtsuka, Bunkyo-ku, Tokyo 112-8610, Japan 
\end{center}

\abstract{
We discuss a conjecture 
that the twisted transfer matrix of the six-vertex model 
at roots of unity with some discrete twist angles 
should have the sl(2) loop algebra symmetry. 
As an evidence of this conjecture, 
we show the following mathematical result on a  
 subalgebra of the sl(2) loop algebra, which we  call 
a Borel subalgebra:  
 any given finite-dimensional highest weight representation 
of the Borel subalgebra is extended into that of 
the sl(2) loop algebra, if the parameters associated with it are nonzero. 
Thus, if operators  commuting or anti-commuting with 
 the twisted transfer matrix of the six-vertex model at roots of unity 
generate the Borel subalgebra, 
then they also generate the sl(2) loop algebra. 
The result should be useful for studying  
the connection of the sl(2) loop algebra symmetry  
to the Onsager algebra symmetry of 
the superintegrable chiral Potts model. 
}

\section{Introduction} 

Spectral properties of 
the XXZ spin chain under the twisted boundary conditions 
have attracted much attention in mathematical physics 
and condensed matter physics \cite{Barber,Alcaraz,SS,ACTWu,Fowler}. 
The XXZ Hamiltonian on a ring of $L$ sites is given by  
\be  
{\cal H}_{\rm XXZ} =  J \sum_{j=1}^{L} \left(\sigma_j^X \sigma_{j+1}^X +
 \sigma_j^Y \sigma_{j+1}^Y + \Delta \sigma_j^Z \sigma_{j+1}^Z  \right) \, . 
\label{hxxz}
\ee
where $\sigma_j^{\alpha}$ ($\alpha=X,Y,Z$) are the Pauli matrices  
defined on the $j$th site, and  they satisfy the   
twisted boundary conditions:   
\be  
\sigma_{L+1}^{\pm} = \exp(\pm i \phi) \sigma_{1}^{\pm}  \, , 
 \qquad \sigma^{Z}_{L+1} = \sigma^{Z}_{1} . 
\label{eq:TBC}
\ee
We call the parameter $\phi$ the twist angle. When $\phi=0$,  
conditions (\ref{eq:TBC}) reduces to the periodic boundary conditions.  
We define parameter $q$ by $\Delta=(q+q^{-1})/2$. 
We also introduce twist parameter $\varphi$ by 
\be  
q^{2 \varphi} = \exp(i \phi) \, .
\ee

It has been shown that when $q$ is a root of unity 
the XXZ spin chain under the periodic boundary conditions 
commutes with the $sl_2$ loop algebra, $U(L(sl_2))$ \cite{DFM}. 
(See also, \cite{FM1,FM2,Odyssey,criterion,RegularXXZ}.) 
Through the similar derivation in terms 
of the Temperley-Lieb algebra as given in \cite{DFM}, 
it was shown that the twisted XXZ spin chain at roots of unity 
commutes with the $sl_2$ loop algebra for $\phi=\pi$, i.e. 
under the anti-periodic boundary conditions \cite{twisted}.   
It was also shown that when $q$ is a root of unity such as $q^{2N}=1$ 
and  $\varphi$ is an integer, 
there exist some operators commuting or anti-commuting 
 with the twisted transfer matrix of the six-vertex model \cite{twisted}.  
Furthermore, it was pointed out by Korff that in some sectors 
such operators generate a subalgebra  $U({\cal B}_0)$ 
of the $sl_2$ loop algebra $U(L(sl_2))$,  
which we call a Borel subalgebra \cite{Korff}.  
Let $x_m^{\pm}$ and $h_n$ for $m,n \in {\bm Z}$ be 
the generators of the $sl_2$ loop algebra $U(L(sl_2))$. 
Then, the Borel subalgebra $U({\cal B}_0)$ is 
generated by the following operators:     
$x_k^{+}, h_k$ for $k = 0, 1, \ldots,$ and    
$x_k^{-}$ for $k = 1, 2, \ldots $.

In the paper we show a mathematical result that  
every highest weight representation of the Borel subalgebra $U({\cal B}_0)$ 
is extended into that of the $sl_2$ loop algebra 
if the parameters associated with the representation are nonzero. 
It follows from the result that if the twisted 
transfer matrix has the Borel subalgebra symmetry, then 
it has also the $sl_2$ loop algebra symmetry.  
 We thus give a conjecture that the $sl_2$ loop algebra 
is generated by the operators constructed in \cite{twisted}
which commute or anti-commute 
 with the twisted transfer matrix of the six-vertex model 
at roots of unity. 
Here we note that Benkart and Terwilliger have shown 
that the action of $U({\cal B}_0)$ on 
a finite-dimensional irreducible 
$U({\cal B}_0)$ module extends uniquely to an action of 
$U_q(L(sl_2))$ on it \cite{Benkart}. 
The mathematical result in the paper 
is new for reducible highest weight representations 
of $U({\cal B}_0)$.  
We also discuss construction of generators of the 
Onsager algebra from a highest weight representation of 
the $sl_2$ loop algebra. The result should be 
useful for investigating the connection of  
the $sl_2$ loop algebra   
to the Onsager algebra symmetry of 
the super-integrable chiral Potts model 
\cite{ND}. Quite recently in an independent research \cite{Helen}, 
eigenvectors of the superintegrable model 
associated with the superintegrable chiral Potts model  
have been studied by making use of the $sl_2$ loop algebra symmetry 
of some XXZ spin chain. They should be 
 closely related to Ref. \cite{ND}, and   
some results of the present paper should also be relevant.  

The content of the paper consists of the following: 
In section 2, we review the infinite-dimensional symmetries of 
the twisted transfer matrix of the six-vertex model at roots of unity. 
In particular, we review operators commuting or anti-commuting with 
the twisted transfer matrix at roots of unity.   
In some sectors they generate the Borel subalgebra. 
In section 3, we show that any given highest weight representation of 
the Borel subalgebra is extended to that of the $sl_2$ loop algebra 
if the associated parameters are nonzero. 
In section 4, we summarize some results on 
the infinite-dimensional symmetry of the twisted transfer matrix of the 
six-vertex model at roots of unity, and then suggest a conjecture 
that the twisted transfer matrix of the 
six-vertex model at roots of unity should have the $sl_2$ loop algebra 
symmetry. In section 5 , we give a method for constructing 
a representation of the Onsager algebra from a finite-dimensional 
highest weight representation of the $sl_2$ algebra. 

%

%
\section{Infinite dimensional symmetry of the twisted XXZ spin chain}  
\subsection{Definition of the twisted transfer matrix}

In order to formulate the twisted transfer matrix of the six-vertex model, 
we review some formulas of the algebraic Bethe ansatz. 
The $R$ matrix of the XXZ spin chain is defined by  
\be
R(z-w)  
= \left(
\begin{array}{cccc} 
f(w-z) &   0 & 0 & 0 \\
0 &   g(w-z) & 1 & 0 \\
0 &  1 & g(w-z) & 0 \\
0 &   0 & 0 & f(w-z)
\end{array} 
\right) 
\ee
where $f(z-w)$ and $g(z-w)$ are given by 
\be 
f(z-w)= {\frac {\sinh(z-w-2 \eta)} {\sinh(z-w)}} \, , \quad
g(z-w)= {\frac {\sinh(-2 \eta)} {\sinh(z-w)}} \, . 
\label{fg}
\ee
%
We introduce $L$ operators for the XXZ spin chain  
\be 
 L_n(z) 
= \left(
\begin{array}{cc}  
L_n(z)^1_1  &  L_n(z)^1_2 \\
 L_n(z)^2_1  & L_n(z)^2_2  
\end{array} 
\right)  
=  \left(
\begin{array}{cc}  
\sinh \left( z \, I_n + \eta \sigma_n^Z \right) 
& \sinh 2 \eta \, \sigma_n^{-} \\
\sinh 2 \eta \, \sigma_n^{+}  
& \sinh \left( z \, I_n - \eta \sigma_n^Z \right) 
\end{array} 
\right)  
\ee
Here $I_n$ and $\sigma_n^a$ ($n=1, \ldots, L$) are 
acting on the $n$th vector space $V_n$. 
We recall that  $\sigma^{\pm}$ denote 
$\sigma^{+}= E_{12}$ and $\sigma^{-} = E_{21}$,    
and $\sigma^X, \sigma^Y, \sigma^Z$ the Pauli matrices.   
In terms of the $R$ matrix and $L$ operators, 
the Yang-Baxter equation is expressed as 
\be 
R(z-w) \left( L_n(z) \otimes L_n(w) \right) 
 = \left( L_n(w) \otimes L_n(z) \right) R(z-w) 
\label{RLL} 
\ee
We define the  monodromy matrix $T(z)$ by 
$T(z) = L_L(z) \cdots L_2(z) L_1(z) \, $.   
The monodromy matrix satisfies the Yang-Baxter equations 
\be 
R(z-w) \, \left( T(z; \{ \xi_n \}) \otimes T(w; \{ \xi_n \})  \right) 
= \left( T(w; \{\xi_n \}) \otimes T(z; \{\xi_n \}) \right) 
\, R(z-w) \label{ybr}
\ee
Let us denote the matrix elements of $T(z)$ as follows:  
\be 
T(z) = 
\left( \begin{array}{cc}
A(z) & B(z) \\
C(z) & D(z) 
\end{array} 
\right)
\ee
The twisted transfer matrix $\tau_{6V}(z; \varphi)$ is defined by 
\be 
\tau_{6V}(z; \varphi) 
= {\rm tr} \left( \, q^{\varphi \sigma_0^z} T(z) \right)= 
q^{\varphi} A(z) + q^{-\varphi} D(z).
\ee
The twisted Hamiltonian is given by the following logarithmic derivative: 
\bea
& &  {\sinh 2 \eta}  \times  
\, {\frac d {dz} } \log \tau(z; \varphi)|_{z=\eta}  
\non \\
& & =   \sum_{j=1}^{L-1} 
\left( 2 \sigma_{j}^{+} \sigma_{j+1}^{-} +  
2 \sigma_{j}^{-} \sigma_{j+1}^{+} + \cosh 2 \eta  
\sigma_{j}^{Z} \sigma_{j+1}^{Z} \right) 
\non \\ 
&  & \qquad + \,  
  q^{- 2\varphi} \, 2 \sigma_{L}^{+} \sigma_{1}^{-} +  
q^{2 \varphi} \, 2 \sigma_{L}^{-} \sigma_{1}^{+} + \cosh 2 \eta  
\sigma_{L}^{Z} \sigma_{1}^{Z}  +  L \cosh 2 \eta  \non \\
&  & \qquad = \,  
{\cal H}_{XXZ}(\phi)/J + L \Delta \, .  \non 
\eea

\subsection{Roots of unity conditions}

Let us formulate roots of unity conditions explicitly as follows
\cite{twisted,RegularXXZ}.   
\begin{df}[Roots of unity conditions]
We say that $q_0$ is a root of unity with $q_0^{2N}=1$, if one of the three 
conditions hold:  
(1) $N$ is odd and $q_0$ is a primitive $N$th root of unity, 
i.e. $q_0^N=1$;  
(2) $N$ is odd and $q_0$ is a primitive $2N$th root of unity 
, i.e. $q_0^N=-1$; 
(3) $N$ is even and $q_0$ is a primitive $2N$th root of unity 
, i.e. $q_0^N=-1$.  
\label{df:roots}
\end{df} 

Let us denote by  $S^Z \pm \varphi$ either $S^Z + \varphi$ or 
$S^Z - \varphi$.  
We now consider the condition of $q_0^{2 S^Z \pm 2 \varphi}=1$.  
The values of $S^Z$ and $\varphi$ are given by integers or half-integers 
under the twisted boundary conditions.   
\par \noindent 
(1) When $N$ is odd and $q_0^N=1$, we have 
$q_0^{2 S^Z \pm 2 \varphi}=1$ 
if and only if $S^Z \pm \varphi \equiv 0$ (mod $N$) 
or $S^Z \pm \varphi \equiv N/2$ (mod $N$). 
When $S^Z \pm \varphi \equiv 0$ (mod $N$),  
$\varphi$ is given by an integer  for even $L$, 
and a half-integer for odd $L$.   
When $S^Z \pm \varphi \equiv N/2$ (mod $N$),  
$\varphi$ is given by a half-integer for even $L$, 
and an integer  for odd $L$.   
\par \noindent
(2) When $N$ is odd and $q_0^N=-1$, we have 
$q^{2 S^Z \pm 2 \varphi}=1$ 
if and only if $S^Z \pm \varphi \equiv 0$ (mod $N$).  
$\varphi$ is given by an integer for even $L$, 
and a half-integer  for odd $L$.   
\par \noindent
(3) When $N$ is even and $q_0^N=-1$, we have 
$q^{2 S^Z \pm 2 \varphi}=1$ 
if and only if $S^Z \pm \varphi \equiv 0$ (mod $N$).  
$\varphi$ is given by an integer for even $L$, 
and a half-integer for odd $L$.   

Here we note that 
if the number of lattice sites $L$ is given by an even integer, 
then $S^Z$ takes integral values, while if $L$ is odd, 
$S^Z$ takes half-integral values. 

\subsection{Operators commuting with the twisted XXZ Hamiltonian}

We now formulate operators commuting or anti-commuting 
with the twisted transfer matrix of the six-vertex model at roots 
of unity \cite{twisted}.  
We introduce operators $S_j^{\pm}$ and $T_j^{\pm}$ by 
\bea 
S_j^{\pm} &= &  
q^{\sigma^Z/2} \otimes \cdots \otimes q^{\sigma^Z/2} \otimes 
\sigma_j^{\pm} \otimes q^{-\sigma^Z/2} \otimes \cdots \otimes q^{-\sigma^Z/2} 
\, , \non \\  
T_j^{\pm} & = & 
q^{-\sigma^Z/2} \otimes \cdots \otimes q^{-\sigma^Z/2} \otimes 
\sigma_j^{\pm} \otimes q^{\sigma^Z/2} \otimes \cdots \otimes q^{\sigma^Z/2}
\quad (j=1, 2, \ldots, L) . 
\eea
We define $S^{\pm}$ and $T^{\pm}$ by 
\be 
S^{\pm} = \sum_{j=1}^L S_j^{\pm} \, , 
\quad  T^{\pm} = \sum_{j=1}^L T_j^{\pm} \, . 
\ee 
They are generators of the affine quantum group $U_q({\hat {sl}}(2))$. 

Let us introduce the $q$-integer $[n]$ and the $q$-factorial $[m]!$,  
respectively, by the following:  
\be 
[n]= {\frac {q^n-q^{-n}} {q-q^{-1}}} \, , \quad 
[m]! = \prod_{k=1}^{m} [k] \, . 
\ee
It is easy to show 
\bea 
(S^{\pm})^m & = & q^{\pm m(m-1)/2} [m]! \, 
\sum_{1 \le i_1 < \cdots < i_m \le L} 
S_{i_1}^{\pm} \cdots S_{i_m}^{\pm} \, , \non \\ 
(T^{\pm})^m & = & q^{\mp m(m-1)/2} [m]! \, 
\sum_{1 \le i_1 < \cdots < i_m \le L} 
T_{i_1}^{\pm} \cdots T_{i_m}^{\pm} \, . 
\eea
The symbols $S^{\pm(N)}$ and $T^{\pm(N)}$ are defined in Ref. \cite{DFM} by 
\be 
 S^{\pm(N)} = \lim_{q \rightarrow q_0} {\frac {(S^{\pm})^N} {[N]!}} \, , 
\quad 
 T^{\pm(N)} = \lim_{q \rightarrow q_0} {\frac {(T^{\pm})^N} {[N]!}} \, . 
\ee
Here we define $(S^{\pm})^{(m)}$ and $(T^{\pm})^{(m)}$ 
for all positive integers $m$ by 
\be 
 (S^{\pm})^{(m)} = \lim_{q \rightarrow q_0} {\frac {(S^{\pm})^m} {[m]!}} \, , 
\quad 
 (T^{\pm})^{(m)} = \lim_{q \rightarrow q_0} {\frac {(T^{\pm})^m} {[m]!}} \, . 
\ee
Explicitly, we have $(S^{\pm})^{(m)}$ for any positive integer $m$ as follows. 
\begin{eqnarray}
(S^{\pm})^{(m)}  
&=&  \sum_{1 \le j_1 < \cdots < j_m \le L}
q_0^{{m \over 2 } \sigma^Z} \otimes \cdots \otimes q_0^{{m \over 2} \sigma^Z}
\otimes \sigma_{j_1}^{\pm} \otimes
q_0^{{(m-2) \over 2} \sigma^Z} \otimes  \cdots \otimes 
q_0^{{(m-2) \over 2} \sigma^Z} \nonumber \\
 & & \otimes \sigma_{j_2}^{\pm} \otimes 
 q_0^{{(m-4) \over 2} \sigma^Z} \otimes
\cdots
\otimes \sigma^{\pm}_{j_m} \otimes q_0^{-{m \over 2} \sigma^Z} 
\otimes \cdots \otimes q_0^{-{m \over 2} \sigma^Z}  \, . 
\label{sn}
\end{eqnarray}

Let $m$ and $n$ be integers such that 
$|m-n|= kN$ for some integer $k$.   
When $q_0$ is a root of unity with $q^{2N}=1$, we have the following. 
\begin{itemize} 
\item[(1)]
In the sectors of $S^Z \equiv -\varphi+n ({\rm mod} N)$, we have 
$$
(S^{+})^{(m)} (T^{-})^{(n)} \tau(z; \varphi) 
= q_0^{m-n} \, \tau(z; \varphi) (S^{+})^{(m)} (T^{-})^{(n)} 
$$
\item[(2)]
In the sectors of $S^Z \equiv -\varphi-n ({\rm mod} N)$, we have 
$$
(T^{-})^{(m)} (S^{+})^{(n)} \tau(z; \varphi) 
= q_0^{m-n} \, \tau(z; \varphi) (T^{-})^{(m)} (S^{+})^{(n)} 
$$
\item[(3)]
In the sectors of $S^Z \equiv \varphi-n ({\rm mod} N)$, we have 
$$
(S^{-})^{(m)} (T^{+})^{(n)} \tau(z; \varphi) 
= q_0^{m-n} \tau(z; \varphi) (S^{-})^{(m)} (T^{+})^{(n)} 
$$
\item[(4)]
In the sectors of $S^Z \equiv \varphi+n ({\rm mod} N)$, we have 
$$
(T^{+})^{(m)} (S^{-})^{(n)} \tau(z; \varphi) 
= q_0^{m-n} \tau(z; \varphi) (S^{-})^{(m)} (T^{+})^{(n)} 
$$
\end{itemize} 
Here we note that $q_0^N = \pm1$ when $q_0$ is a root of unity 
with $q_0^{2N}=1$. Thus we have $q_0^{m-n}=(\pm 1)^k$.  For simplicity, 
we have not considered the case when 
$N$ is odd with $q_0^N=1$ and $S^Z + \varphi \equiv N/2$ (mod $N$) 
or $S^Z - \varphi \equiv N/2$ (mod $N$).

\subsection{Examples}
For an illustration, we consider 
the case of a root of unity 
where $N=3$ ($q_0^{3}=1$)  and  $L$ is even.   
Some of the operators commuting or anti-commuting with the twisted 
transfer matrix are given as follows. 
\begin{itemize}
\item
[(1a)] $\varphi=0$ and $S^Z \equiv 0$ (mod $N$) 
$$
(S^{+})^{(3)} \, , \quad  (S^{-})^{(3)}\, , \quad 
  (T^{+})^{(3)}\, , \quad   (T^{-})^{(3)} 
$$ 
They generate the $sl_2$ loop algebra \cite{DFM}. 
\item
[(1b)] $\varphi=0$ and $S^Z \equiv 1$ (mod $N$): 
\bea 
& &  (S^{+})^{(4)} (T^{-})^{(1)} 
\, , \quad 
(T^{-})^{(5)} (S^{+})^{(2)} 
\, , \quad 
(S^{-})^{(5)} (T^{+})^{(2)} 
\, , \quad 
(T^{+})^{(4)} (S^{-})^{(1)} \, , \non \\ 
& & (S^{+})^{(1)} (T^{-})^{(4)}  
\, , \quad 
(T^{-})^{(2)} (S^{+})^{(5)} 
\, , \quad 
(S^{-})^{(2)} (T^{+})^{(5)} 
\, , \quad 
(T^{+})^{(1)} (S^{-})^{(4)} \, , \ldots \, .  \non
\eea
It is conjectured that they generate the $sl_2$ loop algebra 
\cite{DFM}.  
\item[(1c)] $\varphi=0$ and $S^Z \equiv 2$ (mod $N$): 
\bea 
& & (S^{+})^{(5)} (T^{-})^{(2)} 
\, , \quad 
(T^{-})^{(4)} (S^{+})^{(1)} 
\, , \quad 
(S^{-})^{(4)} (T^{+})^{(1)} 
\, , \quad 
(T^{+})^{(5)} (S^{-})^{(2)} \, , \non \\  
& & (S^{+})^{(2)} (T^{-})^{(5)} 
\, , \quad 
(T^{-})^{(1)} (S^{+})^{(4)} 
\, , \quad 
(S^{-})^{(1)} (T^{+})^{(4)} 
\, , \quad 
(T^{+})^{(2)} (S^{-})^{(5)} \, , \ldots \, . \non 
\eea 
It is conjectured that they should generate the $sl_2$ loop algebra 
\cite{DFM}. 
\end{itemize}

\begin{itemize}
\item[(2a)] $\varphi=1$ and $S^Z \equiv 0$ (mod $N$):   
\bea 
& & (S^{+})^{(4)} (T^{-})^{(1)} 
\, , \quad 
(T^{-})^{(5)} (S^{+})^{(2)} 
\, , \quad 
(S^{-})^{(5)} (T^{+})^{(2)} 
\, , \quad 
(T^{+})^{(4)} (S^{-})^{(1)} \, , \ldots \, , \non \\ 
& & (S^{+})^{(1)} (T^{-})^{(4)} 
\, , \quad 
(T^{-})^{(2)} (S^{+})^{(5)} 
\, , \quad 
(S^{-})^{(2)} (T^{+})^{(5)} 
\, , \quad 
(T^{+})^{(1)} (S^{-})^{(4)} \, , \ldots \, . \non 
\eea
\item[(2b)] $\varphi=1$ and $S^Z \equiv 1$ (mod $N$):  
\bea 
& & (S^{+})^{(5)} (T^{-})^{(2)} 
\, , \quad 
(T^{-})^{(4)} (S^{+})^{(1)} 
\, , \quad 
(S^{-})^{(3)}  
\, , \quad (T^{+})^{(3)} \, , \ldots \, , \non \\ 
& & (S^{+})^{(2)} (T^{-})^{(5)} 
\, , \quad 
(T^{-})^{(1)} (S^{+})^{(4)} 
 \, , \ldots \, .  \non 
\eea
$(S^{-})^{(3)}$ and $(T^{+})^{(3)}$ generate 
a Borel subalgebra \cite{Korff}. 
\item[(2c)] $\varphi=1$ and $S^Z \equiv 2$ (mod $N$) 
\bea 
& & (S^{+})^{(3)} 
\, , \quad 
(T^{-})^{(3)} 
\, , \quad 
(S^{-})^{(5)} (T^{+})^{(2)} 
\, , \quad 
(T^{+})^{(4)} (S^{-})^{(1)}  \, , \ldots \, , \non \\ 
& & \qquad \qquad  
  \quad 
(S^{-})^{(2)} (T^{+})^{(5)} 
\, , \quad 
(T^{+})^{(1)} (S^{-})^{(4)}  \, , \ldots \, . 
\eea
$(S^{+})^{(3)}$ and $(T^{-})^{(3)}$ generate 
a Borel subalgebra \cite{Korff}. 
\end{itemize}

\begin{itemize}
\item[(3a)] $\varphi=2$ and $S^Z \equiv 0$ (mod $N$):   
\bea 
& & (S^{+})^{(5)} (T^{-})^{(2)} 
\, , \quad 
(T^{-})^{(4)} (S^{+})^{(1)} 
\, , \quad 
(S^{-})^{(5)} (T^{+})^{(2)} 
\, , \quad 
(T^{+})^{(4)} (S^{-})^{(1)} \, , \ldots, , \non \\
&& (S^{+})^{(2)} (T^{-})^{(5)} 
\, , \quad 
(T^{-})^{(1)} (S^{+})^{(4)} 
\, , \quad 
(S^{-})^{(2)} (T^{+})^{(5)} 
\, , \quad 
(T^{+})^{(1)} (S^{-})^{(4)} \, , \ldots, . \non 
\eea
\item[(3b)] $\varphi=1$ and $S^Z \equiv 1$ (mod $N$):  
\bea 
& & (S^{+})^{(3)} 
\, , \quad 
(T^{-})^{(3)} 
\, , \quad 
(S^{-})^{(4)} (T^{+})^{(1)} 
\, , \quad 
(T^{+})^{(5)} (S^{-})^{(2)} \, , \non \\ 
& &  
\quad \quad  
\quad \quad 
(S^{-})^{(1)} (T^{+})^{(4)} 
\, , \quad 
(T^{+})^{(2)} (S^{-})^{(5)} \, , \ldots, . \non 
\eea
$(S^{+})^{(3)}$ and $(T^{-})^{(3)}$ generate 
a Borel subalgebra \cite{Korff}. 
\item[(3c)] $\varphi=1$ and $S^Z \equiv 2$ (mod $N$): 
\bea 
& & (S^{+})^{(4)} (T^{-})^{(1)} 
\, , \quad 
(T^{-})^{(5)} (S^{+})^{(2)} 
\, , \quad 
(S^{-})^{(3)}  
\, , \quad (T^{+})^{(3)} \, , \ldots, , \non \\ 
& & (S^{+})^{(1)} (T^{-})^{(4)} 
\, , \quad 
(T^{-})^{(2)} (S^{+})^{(5)} 
\, ,  \ldots, . \non 
\eea 
$(S^{-})^{(3)}$ and $(T^{+})^{(3)}$ generate 
a Borel subalgebra \cite{Korff}. 
\end{itemize}

%
\section{Extension of the Borel subalgebra symmetry}
\subsection{Definition of the Borel subalgebra of $U(L(sl_2))$}
%

We recall that the Borel subalgebra, $U({\cal B}_0)$, 
is generated by the following operators:     
$$
x_k^{+}, h_k \quad  {\rm for} \, k = 0, 1, \ldots, \quad   
{\rm and} \quad 
x_k^{-} \quad {\rm for} \, k = 1, 2, \dots .   
$$
They satisfy the defining relations given as follows: 
\bea 
{[} h_j, x_k^{+} {]} & = & 2 x_{j+k}^{+} \, , 
\quad {\rm for} \quad j, k \ge 0 , 
\non \\ 
{[} h_j, x_k^{-} {]} & = & (- 2) x_{j+k}^{-} \, , \quad {\rm for} 
\quad j \ge 0 \,{\rm and} \,  k \ge 1,  \non \\
{[} x^{+}_j, x_k^{-} {]} & = & \delta_{j,k} h_{j+k} \, , 
\quad {\rm for} \quad j \ge 0 \,{\rm and} \,  k \ge 1,  \non \\ 
{[} h_j, h_k {]} & =& 0 \, ,  \quad {\rm for} \quad 
j, k \ge 0 , 
\non \\  
  {[} x_j^{+}, x_k^{+} {]} & = & 0  
\quad {\rm for} \quad j \ge 0 \,{\rm and} \,  k \ge 0,  \non \\ 
  {[} x_j^{-}, x_k^{-} {]} & = & 0  
\quad {\rm for} \quad j \ge 1 \,{\rm and} \,  k \ge 1 . \label{eq:dfrBorel}  
\eea

%
\subsection{Highest weight vectors and highest weight parameters} 
%

Let us  define highest weight vectors of the Borel subalgebra 
$U({\cal B}_0)$.  

\begin{df} 
In a representation of $U({\cal B}_0)$, 
we call a vector $\Psi$ 
a highest weight vector if it is annihilated by all $x_k^{+}$'s, i.e.    
$x_k^{+} \Psi = 0$ for $k=0, 1, \ldots$, 
and is a simultaneous eigenvector of all $h_k$'s, i.e. 
$h_k \Psi = d_k \Psi$ for $k=0, 1, \ldots .$ 
We call the set of eigenvalues $d_k$ the highest weight of $\Psi$. 
We call the representation generated by a highest weight 
vector $\Psi$, the  highest weight representation of $\Psi$. We 
denote it by $U({\cal B}_0) \Psi$. 
\end{df}


\begin{df} 
Let $\Psi$ be a highest weight vector of $U({\cal B}_0)$.  
If $(x_1^{-})^{r+1} \Psi= 0$ and 
$(x_1^{-})^{r} \Psi \ne 0$ for an integer $r$, 
we say that $x_1^{-}$ is {\it nilpotent of degree $r$} in 
the highest weight representation.  
\end{df} 
 
In a finite-dimensional representation of $U({\cal B}_0)$,  
$x_1^{-}$ is nilpotent, i.e. $(x_1^{-})^s=0$ for some integer $s$.  
For a highest weight vector in a finite-dimensional representation 
of $U({\cal B}_0)$, we can define the highest weight polynomial and 
 highest weight parameters $a_j$  similarly as in the case of 
the $sl_2$ loop algebra \cite{criterion,RAQIS05,DGMTP05}. 

\begin{df}
Let $\Psi$ be a highest weight vector of $U({\cal B}_0)$. 
By applying the Poincar{\' e}-Birkhoff-Witt theorem to $U({\cal B}_0)$, 
it follows that  the highest weight representation 
$U({\cal B}_0)\Psi$ is decomposed into the direct sum of 
subspaces with respect to eigenvalues of $h_0$, and that   
every vector $v$ in the subspace of weight 
$d_0-2n$ is written as follows: 
$$
v = \sum_{1 \le k_1 \le \cdots \le k_n} C_{k_1, \cdots, k_n} x_{k_1}^{-} 
\cdots x_{k_n}^{-} \Psi  \, . 
$$
We call the subspace of weight $d_0-2n$ the sector of degree $n$.  
\end{df}

\begin{prop} 
Let $\Psi$ be a highest weight vector of $U({\cal B}_0)$.  
If $x_1^{-}$ is nilpotent of degree $r$ in the highest weight representation 
$U({\cal B}_0) \Psi$, then 
the sector of degree $2r$ in $U({\cal B}_0) \Psi$ is one-dimensional. 
\label{prop:1-dim} 
\end{prop}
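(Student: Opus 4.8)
The plan is to reduce the whole problem to the single operator $x_1^-$ together with the Cartan currents, and then to use the raising operators to control the size of each sector. The key preliminary observation is that every $x_k^-$ is built from $x_1^-$: from $[h_1,x_k^-]=-2x_{k+1}^-$ one gets $x_{k+1}^-=-\tfrac12[h_1,x_k^-]$, so the formal element
$$
x_1^-(s):=e^{-sh_1/2}\,x_1^-\,e^{sh_1/2}=\sum_{k\ge1}\frac{s^{k-1}}{(k-1)!}\,x_k^-
$$
carries all of them. Since the $x^-$'s commute, $(x_1^-)^{r+1}$ commutes through any monomial $x_{k_1}^-\cdots x_{k_m}^-\Psi$ and annihilates $\Psi$, hence it is the zero operator on $U(\mathcal B_0)\Psi$; conjugating, $(x_1^-(s))^{r+1}=e^{-sh_1/2}(x_1^-)^{r+1}e^{sh_1/2}=0$ on $U(\mathcal B_0)\Psi$ for every $s$. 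Extracting the coefficient of each power of $s$ produces a family of degree-$(r{+}1)$ relations $\rho_N$ $(N\ge0)$:
$$
\rho_N:\qquad\sum_{k_1+\cdots+k_{r+1}=N+r+1}\frac{x_{k_1}^-\cdots x_{k_{r+1}}^-\,\Psi}{(k_1-1)!\cdots(k_{r+1}-1)!}=0,
$$
and, since the coefficients of $x_1^-(s_1)\cdots x_1^-(s_n)\Psi$ span the sector of degree $n$, these relations are the engine for bounding $\dim V_n$ from above.

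The relations $\rho_N$ are by themselves too weak — already the sector of degree $1$ need not be finite dimensional — so the second step brings in the raising operators. From $[x_j^+,x_k^-]=\delta_{j,k}h_{j+k}$, $[h_j,x_k^-]=-2x_{j+k}^-$, $x_j^+\Psi=0$ and $h_j\Psi=d_j\Psi$ one gets
$$
x_j^+\,x_{k_1}^-\cdots x_{k_n}^-\Psi=\sum_{i:\,k_i=j}x_{k_1}^-\cdots x_{k_{i-1}}^-\,h_{2j}\,x_{k_{i+1}}^-\cdots x_{k_n}^-\Psi,
$$
and pushing each $h_{2j}$ to the right rewrites this as a combination of degree-$(n-1)$ monomials; in particular every $x_j^+$ maps the sector of degree $n$ into that of degree $n-1$. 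I would then run a descending induction on the degree, refined by a secondary induction on the total loop index $k_1+\cdots+k_n$: using the $\rho_N$ and their iterated images under the $x_j^+$'s, every monomial of degree $2r$ is rewritten into a single normal form, so that sector is at most one dimensional. A cleaner repackaging of the same argument is to identify this sector with the space of lowest weight vectors of $U(\mathcal B_0)\Psi$ and to show that the nilpotency degree $r$ of $x_1^-(s)$ pins the bottom of the module to degree $2r$ and to a single line.

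It remains to show the dimension is exactly one, i.e. that the distinguished top vector does not vanish. Starting from $(x_1^-)^r\Psi\ne0$, I would apply to the candidate vector a suitable word $x_{j_1}^+\cdots x_{j_m}^+$; by the commutation rules the outcome is a scalar multiple of $\Psi$, the scalar being an explicit product of the eigenvalues $d_{2k}$ with combinatorial factors that is nonzero precisely because $(x_1^-)^r\Psi\ne0$. Since $\Psi\ne0$, the candidate is nonzero, and together with the upper bound the sector is exactly one dimensional.

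The hard part is the second paragraph: converting the single fact "$(x_1^-)^{r+1}\Psi=0$, $(x_1^-)^r\Psi\ne0$" — information about one operator applied to one vector — into control of the whole filtered module. The relations $\rho_N$ do not suffice on their own, so one must interleave them with the raising-operator action, and the bookkeeping of how $x_j^+$ simultaneously lowers the degree and shifts loop indices is intricate; moreover, since $U(\mathcal B_0)\Psi$ may be reducible, one cannot simply quote the classification of finite dimensional irreducible $U(\mathcal B_0)$-modules — precisely the reducible case is the new content here — so the uniqueness of the normal form at degree $2r$ must be established directly.
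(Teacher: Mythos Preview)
Two typos in the paper have sent you down the wrong path. First, the defining relation should read $[x_j^+,x_k^-]=h_{j+k}$ with no Kronecker delta (compare eq.~(\ref{eq:dfr-AB}) and the proof of Theorem~\ref{th:new}, where the bracket is used without any $\delta$); your formula for $x_j^+\,x_{k_1}^-\cdots x_{k_n}^-\Psi$, and hence the entire raising--operator bookkeeping you sketch, is built on the incorrect bracket. Second, the ``$2r$'' in the proposition is a slip for ``$r$'': the lemma immediately following, which the paper says proves the proposition, asserts that for every $n\le r$ and any positive $k_1,\dots,k_n$,
\[
(x_1^-)^{r-n}\,x_{k_1}^-\cdots x_{k_n}^-\Psi \;=\; A_{k_1,\dots,k_n}\,(x_1^-)^r\Psi,
\]
and the case $n=r$ says precisely that every degree-$r$ monomial is a scalar multiple of $(x_1^-)^r\Psi$. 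In fact, once this is known, the sectors of degree larger than $r$ vanish: multiplying the $n=1$ case by $x_1^-$ gives $x_k^-(x_1^-)^r\Psi=A_k(x_1^-)^{r+1}\Psi=0$ for every $k\ge1$, so the literal ``degree $2r$'' statement would be false for $r\ge1$. The author presumably had the weight $d_0-2r$ in mind.

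The paper does not spell out the proof of that lemma but cites \cite{DGMTP05,criterion}. The argument is a short induction on $n$ and on $\sum k_i$: one starts from $(x_1^-)^{r+1}\Psi=0$ and applies raising operators (with the \emph{correct} bracket $[x_j^+,x_1^-]=h_{j+1}$) to manufacture relations in degree $r$. For instance, applying $x_0^+$ once to $(x_1^-)^{r+1}\Psi=0$ and commuting through yields $(x_1^-)^{r-1}x_2^-\Psi=\tfrac{d_1}{r}(x_1^-)^r\Psi$, the first nontrivial case. Your exponential repackaging $(x_1^-(s))^{r+1}=0$ and the relations $\rho_N$ are correct and are essentially the degree-$(r{+}1)$ input to this same scheme, so your instinct is not wrong; but your subsequent plan --- a ``descending induction'' aimed at degree $2r$, interleaved with $x_j^+$'s acting via the $\delta_{j,k}$ rule --- is pointed at the wrong sector, uses the wrong commutator, and (as you yourself concede) leaves the actual combinatorics undone. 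Once you fix the target to degree $r$ and the bracket to $[x_j^+,x_k^-]=h_{j+k}$, the direct route is the lemma above, and the non-vanishing is immediate from $(x_1^-)^r\Psi\ne0$ with no separate argument needed.
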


We can show  proposition \ref{prop:1-dim} through the following lemma 
\cite{DGMTP05,criterion}. 
\begin{lem}
Let $\Psi$ be a highest weight vector of $U({\cal B}_0)$. 
We assume that 
$x_1^{-}$ is nilpotent of degree $r$ in $U({\cal B}_0)\Psi$. 
Let us take a non-negative integer $n$ satisfying $n \le r$. Then,   
 for any set of positive integers, $k_1, \ldots, k_n$, we have  
\be 
(x_1^{-})^{r-n} x_{k_1}^{-} \cdots x_{k_n}^{-} \Psi 
= A_{k_1, \ldots, k_n} \, (x_1^{-})^{r} \Psi  \, . 
\label{eq:r-n}
\ee
Here, $A_{k_1, \ldots, k_n}$ is  given by a complex number. 
\end{lem}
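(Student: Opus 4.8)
The plan is to argue by induction on the number $n$ of lowering generators on the left of (\ref{eq:r-n}). The case $n=0$ is trivial ($A=1$). So assume (\ref{eq:r-n}) holds for every product of fewer than $n$ lowering generators, with the same $\Psi$ and the same $r$, and let $1\le n\le r$. Since the $x_k^-$ mutually commute by (\ref{eq:dfrBorel}) we may list the indices so that $k_1\le\cdots\le k_n$; and if $k_1=1$ we absorb that factor into the prefactor, rewriting the left side as $(x_1^-)^{r-(n-1)}x_{k_2}^-\cdots x_{k_n}^-\Psi$, which the inductive hypothesis reduces at once to a multiple of $(x_1^-)^r\Psi$. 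Hence it remains to treat $k_1,\dots,k_n\ge 2$.

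In that case put $p=r-n+2$ and $X=x_{k_1}^-\cdots x_{k_{n-1}}^-$ (for $n=1$ the product $X$ is empty). First I would produce an auxiliary vanishing vector: by the inductive hypothesis $(x_1^-)^{r-(n-1)}X\Psi=A'(x_1^-)^r\Psi$ for some constant $A'$, so multiplying by $x_1^-$ and using that $x_1^-$ is nilpotent of degree $r$ gives $(x_1^-)^{p}X\Psi=0$. Then I would apply the raising generator $x_{k_n-2}^+$ — legitimate since $k_n-2\ge 0$ — to this vector and expand, using $x_m^+\Psi=0$, $h_m\Psi=d_m\Psi$, and (\ref{eq:dfrBorel}): the single $x_{k_n-2}^+$ is commuted rightward through $(x_1^-)^{p}X$; at one factor it is absorbed into a Cartan element ($h_{k_n-1}$ if that factor is an $x_1^-$, otherwise $h_{k_n-2+k_i}$); that Cartan element is then carried on to the right onto $\Psi$, where it acts by a scalar; and the passage of the Cartan element past the remaining lowering generators produces, through $[h_a,x_b^-]=-2x_{a+b}^-$, further lowering generators of higher index.

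The key to the bookkeeping is a count of the ``non-$x_1^-$'' factors. The vector $(x_1^-)^{p}X$ carries $n-1$ such factors; absorbing $x_{k_n-2}^+$ into $h$ at an $X$-factor drops the count to $n-2$, while absorbing it at an $x_1^-$ leaves the count at $n-1$ unless the resulting $h_{k_n-1}$, while moving to the right, converts one further $x_1^-$ into $x_{k_n}^-$, which raises the count to $n$. Consequently every term that appears is of the form $(x_1^-)^{r-m}x_{j_1}^-\cdots x_{j_m}^-\Psi$ with $m\le n$, and $m=n$ only for the term just described. All terms with $m\le n-1$ collapse, by the inductive hypothesis, to scalar multiples of $(x_1^-)^r\Psi$; and the $m=n$ contributions all coincide by commutativity of the $x^-$'s — there are $\sum_{s=0}^{p-1}(p-1-s)=\tfrac12 p(p-1)$ of them, one for each pair consisting of an entry site $s$ of $x_{k_n-2}^+$ in the block $(x_1^-)^{p}$ and one of the $p-1-s$ copies of $x_1^-$ subsequently struck by $h_{k_n-1}$, each equal to $-2(x_1^-)^{p-2}x_{k_n}^-X\Psi$ — so they sum to $-p(p-1)(x_1^-)^{p-2}x_{k_n}^-X\Psi=-(r-n+2)(r-n+1)(x_1^-)^{r-n}x_{k_1}^-\cdots x_{k_n}^-\Psi$. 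Since $n\le r$ forces $(r-n+2)(r-n+1)\ge 2\ne 0$, the vanishing of $x_{k_n-2}^+(x_1^-)^{p}X\Psi$ lets me solve for this term, which is precisely (\ref{eq:r-n}), with $A_{k_1,\dots,k_n}$ a rational expression in the integer $r$ and the highest weights $d_k$.

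The main obstacle is the organization of this expansion: one has to verify that, after carrying each intermediate Cartan generator all the way onto $\Psi$ and tracking the higher-index lowering generators it creates, no term survives with $m=n$ except the single distinguished one, and that every $m\le n-1$ term really does match the hypothesis of (\ref{eq:r-n}) for its number of lowering generators (i.e.\ has exactly the prefactor power $(x_1^-)^{r-m}$). Everything else is either the trivial base case or, via the non-vanishing of $(r-n+2)(r-n+1)$ — which is exactly where the constraint $n\le r$ is used — an immediate consequence of the inductive hypothesis.
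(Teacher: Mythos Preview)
The paper does not actually prove this lemma; it merely states it with a citation to \cite{DGMTP05,criterion}, so there is no ``paper's own proof'' to compare against beyond the bare attribution. Your argument therefore has to be judged on its own merits, and it is correct.

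The scheme --- induct on $n$; strip off any $k_i=1$; use the inductive hypothesis at level $n-1$ together with nilpotency to manufacture the vanishing vector $(x_1^-)^{r-n+2}X\Psi=0$; hit it with $x_{k_n-2}^+$ and push that operator to the right --- is exactly the right idea, and the bookkeeping you outline checks out. In particular: (i) absorption of $x_{k_n-2}^+$ at an $X$-slot followed by transport of the resulting Cartan to $\Psi$ produces only terms of the shape $(x_1^-)^{r-(n-2)}(\text{$n{-}2$ generators})\Psi$; (ii) absorption at an $x_1^-$-slot followed by transport of $h_{k_n-1}$ through the remaining $x_1^-$'s and through $X$ yields terms of shape $(x_1^-)^{r-(n-1)}(\text{$n{-}1$ generators})\Psi$, \emph{except} when $h_{k_n-1}$ converts a second $x_1^-$ into $x_{k_n}^-$, which happens $\binom{p}{2}$ times and gives the unique $m=n$ contribution $-p(p-1)(x_1^-)^{r-n}x_{k_1}^-\cdots x_{k_n}^-\Psi$. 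Since $p(p-1)=(r-n+2)(r-n+1)\ge 2$ for $n\le r$, you can solve for the desired vector. All lower-$m$ terms land squarely in the inductive hypothesis with the correct power of $x_1^-$ out front, so the induction closes.

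Two small remarks. First, in the displayed relations (\ref{eq:dfrBorel}) of the paper the commutator $[x_j^+,x_k^-]$ is written with a spurious $\delta_{j,k}$; your argument (correctly) uses the loop-algebra relation $[x_j^+,x_k^-]=h_{j+k}$, which is what is actually meant and used elsewhere in the paper. Second, your phrase ``further lowering generators of higher index'' slightly undersells what happens: the commutator $[h_a,x_b^-]=-2x_{a+b}^-$ \emph{replaces} the pair $(h_a,x_b^-)$ by a single $x_{a+b}^-$, so the total number of lowering generators is preserved in that step --- which is precisely why the non-$x_1^-$ count behaves as you say. This is implicit in your accounting but worth stating plainly.
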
 

Let us denote by $(X)^{(n)}$ the $n$th power of operator $X$ 
divided by the $n$ factorial, i.e. $(X)^{(n)} = X/n!$.
\begin{lem} 
Let $\Psi$ be a highest weight vector of $U({\cal B}_0)$. If  
$x_1^{-}$ is {\it nilpotent of degree $r$} in $U({\cal B}_0)\Psi$,  then 
$\Psi$ is a simultaneous eigenvector of $(x_0^{+})^{(n)} (x_1^{-})^{(n)}$:   
\be 
(x_0^{+})^{(j)} (x_1^{-})^{(j)} \Omega = \lambda_j \Omega \, , \quad 
\mbox{\rm for} \quad j=1, 2, \ldots, r \, . 
\label{eq:01}
\ee
Here $\lambda_j$ are eigenvalues. 
\label{lem:lambda} 
\end{lem}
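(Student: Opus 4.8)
The plan is to prove Lemma~\ref{lem:lambda} by induction on the degree $j$, using the commutation relations \eqref{eq:dfrBorel} to pull the operator $(x_0^+)^{(j)}(x_1^-)^{(j)}$ past a general monomial in the $x_k^-$'s and then appeal to the preceding two lemmas to collapse everything onto a multiple of $\Psi$. Concretely, fix $j$ with $1\le j\le r$; I want to show that $(x_0^+)^{(j)}(x_1^-)^{(j)}\Psi$ is proportional to $\Psi$. First observe that $(x_1^-)^{(j)}\Psi$ lives in the sector of degree $j$, so $(x_0^+)^{(j)}(x_1^-)^{(j)}\Psi$ lies back in the weight-$d_0$ subspace. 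The key point is that the weight-$d_0$ subspace of $U({\cal B}_0)\Psi$ is one-dimensional, spanned by $\Psi$ itself: any vector of weight $d_0$ is, by the PBW decomposition in Definition~4, a sum of terms $x_{k_1}^-\cdots x_{k_n}^-\Psi$ with $n=0$, since each $x_k^-$ lowers the $h_0$-weight by $2$ and no generator of $U({\cal B}_0)$ raises it (the $x_k^+$ annihilate $\Psi$, and the $h_k$ preserve weight). Hence $(x_0^+)^{(j)}(x_1^-)^{(j)}\Psi = \lambda_j\Psi$ for some scalar $\lambda_j$, which is exactly \eqref{eq:01}; it only remains to check $\lambda_j$ is well-defined, i.e. that the expression is nonzero enough to make sense as an eigenvalue equation, but since $\Psi\neq 0$ and the left side is a scalar multiple of $\Psi$ this is immediate (and $\lambda_j=0$ is permitted).

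Let me reconsider: actually the above argument already suffices and no induction is needed — the statement is purely a weight-space-dimension observation. So the cleaner plan is: (i) recall from Definition~4 that $U({\cal B}_0)\Psi = \bigoplus_{n\ge 0} V_n$ where $V_n$ is the sector of degree $n$, i.e. the $h_0$-eigenspace of eigenvalue $d_0-2n$; (ii) note $V_0 = \mathbb{C}\Psi$, because any spanning vector $x_{k_1}^-\cdots x_{k_n}^-\Psi$ of $V_n$ has $n$ strictly positive unless it is $\Psi$, and the $h_0$-grading is exact by PBW so different $V_n$ do not overlap; (iii) observe that $(x_1^-)^{(j)}$ maps $V_0$ into $V_j$ and $(x_0^+)^{(j)}$ maps $V_j$ back into $V_0$ — for the latter, since $[h_0,x_0^+]=2x_0^+$ raises weight by $2$ and we apply it $j$ times, the image of $V_j$ under $(x_0^+)^{(j)}$ lies in the weight-$d_0$ subspace, which is $V_0=\mathbb{C}\Psi$; (iv) conclude $(x_0^+)^{(j)}(x_1^-)^{(j)}\Psi\in\mathbb{C}\Psi$, define $\lambda_j$ by this relation. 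One should double-check that $(x_0^+)^{(j)}(x_1^-)^{(j)}\Psi$ really is a legitimate vector in the representation, i.e. that $(x_1^-)^{(j)}\Psi\neq 0$ is not needed — it is not, since even if $(x_1^-)^{(j)}\Psi=0$ we simply get $\lambda_j=0$.

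I expect the only genuinely delicate step to be justifying that $V_0$ is exactly one-dimensional, i.e. that no nontrivial product $x_{k_1}^-\cdots x_{k_n}^-\Psi$ with all $k_i\ge 1$ and $n\ge 1$ can have $h_0$-weight $d_0$. This follows because each $x_k^-$ strictly decreases the $h_0$-eigenvalue (by the relation $[h_0,x_k^-]=-2x_k^-$ for $k\ge 1$) and $\Psi$ is the unique vector, up to scalars, annihilated by all $x_k^+$; but care is required to present the grading cleanly, because the operators $x_k^-$ for varying $k$ all lower the weight by the same amount $2$, so $V_n$ genuinely is the full weight-$(d_0-2n)$ space and the decomposition $U({\cal B}_0)\Psi = \bigoplus V_n$ is the eigenspace decomposition of $h_0$. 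Once that is in place the lemma is essentially a triviality; I would state it as the content of Definition~4 made precise and then read off \eqref{eq:01}. I do not anticipate needing Lemma~6 or Proposition~\ref{prop:1-dim} for this particular lemma — they will be used later to pin down the actual \emph{values} $\lambda_j$ and relate them to highest weight parameters — but they could be invoked if one preferred to argue via the explicit basis rather than the weight grading.
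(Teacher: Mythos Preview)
Your proposal is correct and, after you abandon the unnecessary induction, is essentially identical to the paper's proof: the paper also argues that by PBW the sector of degree $0$ in $U({\cal B}_0)\Psi$ is one-dimensional, and since $(x_0^{+})^{(j)}(x_1^{-})^{(j)}\Psi$ lies in that sector it must be proportional to $\Psi$. Your elaboration of why $V_0=\mathbb{C}\Psi$ and why the image lands there simply spells out what the paper leaves implicit.
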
 
\begin{proof} 
From the Poincar{\' e}-Birkhoff-Witt theorem  
of $U({\cal B}_0)$, it follows that the sector 
of degree 0  in $U({\cal B}_0) \Psi$ is one-dimensional. 
Since  $(x_0^{+})^{(j)} (x_1^{-})^{(j)} \Psi$ is 
in the sector of degree $0$ in $U({\cal B}_0) \Psi$,  
it is proportional to the basis vector $\Psi$. 
\end{proof}

Let $\Psi$ be a highest weight vector in a finite-dimensional 
representation of the Borel subalgebra $U({\cal B}_0)$.  
We now introduce parameters expressing the highest weight of $\Psi$.   
We denote by $\lambda=(\lambda_1, \ldots, \lambda_r)$ 
the sequence of eigenvalues $\lambda_k$ which are defined 
in eq. (\ref{eq:01}). Here we recall that 
in a finite-dimensional representation, $x_1^{-}$ is nilpotent of some degree. 
We define a polynomial $P_{\lambda}(u)$ by the following relation 
\cite{criterion}:  
\be 
P_{\lambda}(u) =\sum_{k=0}^{r} \lambda_k (-u)^k  \, .   
\label{eq:DrinfeldP}
\ee
We call it the {\it highest weight polynomial} of $\Psi$.

Let us factorize polynomial $P_{\lambda}(u)$ as follows 
\be 
P_{\lambda}(u) = \prod_{k=1}^{s} (1 - a_k u)^{m_k} \, ,   
\label{eq:factor}
\ee 
where $a_1, a_2, \ldots, a_s$ are distinct, and their  
 multiplicities are given by  $m_1, m_2, \ldots, m_s$, respectively.   
We denote by ${\bm a}$ the sequence of $s$ 
parameters $a_j$: 
\be 
{\bm a}=(a_1, a_2, \ldots, a_s). 
\ee
Here we note that $r$ is equal to 
the sum of multiplicities $m_j$: $r=m_1 + \cdots + m_s$. 
We define parameters ${\hat a}_i$ for $i=1, 2, \ldots, r$, as follows.  
\be 
{\hat a}_i = a_k \quad {\rm if } \, \, m_1+ m_2 + \cdots + m_{k-1} < i \le  
m_1+ \cdots + m_{k-1} + m_{k} \, . 
\label{eq:hat-a}
\ee
Then, the set $\{ {\hat a}_j \, | j =1, 2, \ldots, r \}$ corresponds to the 
set of parameters $a_j$ with multiplicities $m_j$ for 
 $j=1, 2, \ldots, s$. We denote by ${\hat {\bm a}}$ 
 the sequence of $r$ parameters ${\hat a}_i$: 
\be  
 {\hat {\bm a}}=({\hat a}_1, {\hat a}_2, \ldots, {\hat a}_r) .      
\ee
We call parameters ${\hat a}_i$ 
the {\it highest weight parameters} of $\Psi$. 
It follows from the definition of highest weight polynomial 
${\cal P}_{\lambda}(u)$ given by (\ref{eq:DrinfeldP}) 
and that of highest weight parameters (\ref{eq:factor}) that 
 we have   
\be 
\lambda_{n} = \sum_{1 \le j_1 < \cdots < j_n \le r} 
{\hat a}_{j_1} \cdots {\hat a}_{j_n} \, . 
\label{eq:lambda}
\ee
If the highest weight parameters are nonzero, we define 
${\bar \lambda}_{n}$ for $n=0, 1, \ldots, r$ by 
\be 
{\bar \lambda}_{n} = \sum_{1 \le j_1 < \cdots < j_n \le r} 
{\hat a}_{j_1}^{-1} \cdots {\hat a}_{j_n}^{-1} \, . 
\label{eq:lambda-bar}
\ee

We remark that we may call the highest weight polynomial of $\Psi$ and 
the highest weight parameters of $\Psi$ 
the {\it loop-highest weight polynomial} of $\Psi$
and the {\it loop-highest weight parameters} of $\Psi$, respectively 
 \cite{l-highest}.

%
\subsection{Borel subalgebra generators with parameters}
%

Let ${\bm \alpha}$ denote a finite sequence of complex parameters such as 
${\bm {\alpha}} =(\alpha_1, \alpha_2, \ldots, \alpha_n)$. 
We define generators with $n$ parameters, $x_m^{\pm}({\bm \al})$ and 
$h_m({\bm \al})$, as follows \cite{criterion,RAQIS05,DGMTP05}: 
\begin{eqnarray} 
x_{m}^{\pm}({\bm \al}) & = & 
\sum_{k=0}^{n} (-1)^k x_{m-k}^{\pm} 
\sum_{\{ i_1, \ldots, i_k \} \subset \{1, \ldots, n \}}  
\alpha_{i_1} \alpha_{i_2} \cdots \alpha_{i_k} \, , 
\non \\
h_{m}({\bm \al}) & = & 
\sum_{k=0}^{n} (-1)^k h_{m-k} 
\sum_{\{ i_1, \ldots, i_k \} \subset \{1, \ldots, n \}}  
\alpha_{i_1} \alpha_{i_2} \cdots \alpha_{i_k} \, . 
\label{eq:def-alpha}
\end{eqnarray}
Here, in the case of the Borel subalgebra $U({\cal B}_0)$, 
we define $x_{m}^{+}({\bm \al})$ and  $h_{m}({\bm \al})$ for $m \ge n \ge 0$,  
and $x_{m}^{-}({\bm \al})$ for $m > n \ge 0$.

Let ${\bm \al}$ and ${\bm \beta}$ 
be arbitrary sequences of $n$ and $p$ parameters, respectively.   
Here we have $n, p \ge 0$. 
In terms of generators with parameters 
we express the defining relations of the Borel subalgebra as follows:   
\begin{equation} 
[x_{\ell}^{+}({\bm \al}), x_m^{-}({\bm \beta})] 
= h_{\ell+m}({\bm {\al \beta}}) \, , \quad 
[h_{\ell}({\bm \al}), x^{-}_{m}({\bm \beta})] 
= - 2  x_{\ell+m}^{-}({\bm {\al \beta}}) \, ,  
\label{eq:dfr-AB}
\end{equation}
for $\ell \ge n$ and $m > p$,  and 
\be 
[h_{\ell}({\bm \al}), x^{+}_{m}({\bm \beta})] 
= 2  x_{\ell+m}^{+}({\bm {\al \beta}}) \, . 
\ee
for $\ell \ge n$ and $m \ge p$, 
Here the symbol ${\bm {\al \beta}}$ 
denotes the composite sequence of ${\bm \al}$ and ${\bm \beta}$: 
\be 
{\bm {\al \beta}} 
= (\alpha_1, \alpha_2, \ldots, \alpha_{n}, 
\beta_1, \beta_2, \ldots, \beta_{p}). 
\ee
%
%
%

\begin{lem}
Let $\Psi$ be a highest weight vector of $U({\cal B}_0)$. 
If $x_t^{-}({\bm \alpha}) \Psi = 0$ for a positive integer $t$ 
and a sequence of parameters ${\bm \alpha}=(\alpha_1, \ldots, \alpha_n)$ where 
$t > n$,  we have 
$h_{t+m}({\bm \alpha})=0$ and $x_{t+m}^{\pm}({\bm \alpha}) =0$ 
 in the highest weight representation of $\Psi$ for $m \in {\bf Z}_{\ge 0}$: 
for any set of positive integers,  
$k_1, k_2, \ldots, k_n$,  and for $m \in {\bf Z}_{\ge 0}$,  
we have  the following: 
\bea 
& & x_{t+m}^{-}({\bm \alpha}) \, 
x_{k_1}^{-} x_{k_2}^{-}  \cdots x_{k_n}^{-} \Psi = 0 \, , 
\label{eq:x-} \\ 
& & h_{t+m}({\bm \alpha}) \, x_{k_1}^{-} x_{k_2}^{-}  
\cdots x_{k_n}^{-} \Psi = 0 \, , \label{eq:h} \\ 
& & x_{t+m}^{+}({\bm \alpha}) \, 
x_{k_1}^{-} x_{k_2}^{-}  \cdots x_{k_n}^{-} \Psi = 0 \, . \label{eq:x+}  
\eea 
\label{lem:x+hx-}
\end{lem}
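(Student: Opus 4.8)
The plan is to prove the three identities \eqref{eq:x-}, \eqref{eq:h}, \eqref{eq:x+} in that order, since each is needed to establish the next; the whole argument rests on pushing the hypothesis $x_t^-({\bm\alpha})\Psi=0$ forward through repeated application of the relations \eqref{eq:dfr-AB} and the fact that $\Psi$ is annihilated by all $x_k^+$. The key structural observation is that if a vector $v$ of the form $x_{k_1}^-\cdots x_{k_n}^-\Psi$ is killed by a left multiplication $x_{t+m}^-({\bm\alpha})\,v$, then by bracketing with $x_0^+$ one transfers that vanishing to $h$-type operators, and by further bracketing with $x_0^+$ (or with an $x_j^+$) back to $x^+$-type operators, all without leaving the highest weight representation $U({\cal B}_0)\Psi$. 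So the core of the proof is \eqref{eq:x-}, and \eqref{eq:h} and \eqref{eq:x+} then follow by short applications of the commutation relations.

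\medskip

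For \eqref{eq:x-}, I would proceed by a double induction: on $m$ (to increase the index $t+m$) and on $n$ (the number of lowering operators applied to $\Psi$). The base case $m=0$, $n=0$ is exactly the hypothesis $x_t^-({\bm\alpha})\Psi=0$. For the induction on $n$, the idea is to commute $x_{t+m}^-({\bm\alpha})$ past $x_{k_1}^-$: since $[x_{t+m}^-({\bm\alpha}),x_{k_1}^-]=0$ up to the usual reordering (the $x^-$'s commute among themselves by \eqref{eq:dfrBorel}, and $x_{t+m}^-({\bm\alpha})$ is a linear combination of $x_j^-$'s with $j\ge t+m-n\ge 1$), one gets $x_{t+m}^-({\bm\alpha})\,x_{k_1}^-\cdots x_{k_n}^-\Psi = x_{k_1}^-\cdots x_{k_n}^-\,x_{t+m}^-({\bm\alpha})\Psi$; but $x_{t+m}^-({\bm\alpha})\Psi$ is only controlled for $m=0$. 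The cleaner route is instead the induction on $m$: to pass from $t+m$ to $t+m+1$, write $x_{t+m+1}^-({\bm\alpha})=\tfrac12[h_1,x_{t+m}^-({\bm\alpha})]$ (valid since $h_1=h_1(\emptyset)$ and $[h_1,x_{t+m}^-({\bm\alpha})]=-2x_{t+m+1}^-({\bm\alpha})$ by \eqref{eq:dfr-AB} with empty $\bm\beta$), expand the bracket acting on $x_{k_1}^-\cdots x_{k_n}^-\Psi$, and use that $h_1$ sends a degree-$n$ vector to a linear combination of degree-$n$ vectors of the same form (since $[h_1,x_k^-]=-2x_{k+1}^-$), so that the inductive hypothesis for index $t+m$ applies to each resulting term. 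One must also track the $n$-dependence of the range of validity ($t+m>n$), but this is automatic because applying $h_1$ does not change $n$.

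\medskip

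Given \eqref{eq:x-}, identity \eqref{eq:h} follows by writing $h_{t+m}({\bm\alpha})=[x_0^+,x_{t+m}^-({\bm\alpha})]$ (the case $\ell=0$, $\bm\beta={\bm\alpha}$, shifting indices, of \eqref{eq:dfr-AB}), applying it to $x_{k_1}^-\cdots x_{k_n}^-\Psi$, and expanding: the term where $x_{t+m}^-({\bm\alpha})$ acts first vanishes by \eqref{eq:x-}, and the term where $x_0^+$ acts first produces, after commuting $x_0^+$ through the $x_{k_i}^-$'s via $[x_0^+,x_k^-]=\delta_{0,k}h_k$ — which for $k_i\ge1$ is zero, and on $\Psi$ gives $x_0^+\Psi=0$ — again a sum of terms each killed by \eqref{eq:x-} or ending in $x_0^+\Psi=0$. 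Finally \eqref{eq:x+} follows similarly: $x_{t+m}^+({\bm\alpha})=\tfrac12[h_0,x_{t+m}^+({\bm\alpha})]$ is circular, so instead use $2x_{t+m}^+({\bm\alpha})=[h_{t+m}({\bm\alpha}),x_0^+]$? — no, better: $x_{t+m}^+({\bm\alpha})$ applied to a degree-$n$ vector lands in the degree-$(n-1)$? Actually the natural move is $[x_{t+m}^-({\bm\alpha}),x_{k_1}^-\cdots]$-type manipulation combined with \eqref{eq:h}; concretely, commute $x_{t+m}^+({\bm\alpha})$ leftward through the $x_{k_i}^-$'s using $[x^+_{t+m}({\bm\alpha}),x^-_{k_i}]=h_{t+m+k_i}({\bm\alpha})$, producing terms with one fewer lowering operator times an $h_{t+m+k_i}({\bm\alpha})$ — which is killed by \eqref{eq:h} — plus the term $x_{k_1}^-\cdots x_{k_n}^-\,x_{t+m}^+({\bm\alpha})\Psi$, and $x_{t+m}^+({\bm\alpha})\Psi=0$ because $\Psi$ is a highest weight vector annihilated by every $x_j^+$ with $j\ge0$ and $t+m\ge0$.

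\medskip

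The main obstacle is the bookkeeping in \eqref{eq:x-}: one has to be careful that all the intermediate operators produced (the various $x_j^-$ with shifted or lowered indices, coming from commuting $h_1$ through) still have index $\ge1$ so they remain legitimate $U({\cal B}_0)$-generators, and that the inductive hypothesis is invoked only in the allowed range $t+m>n$. Organizing the induction on $m$ with $n$ held fixed (and arbitrary), rather than a simultaneous double induction, should keep this manageable, since none of the manipulations change $n$.
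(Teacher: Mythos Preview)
Your overall strategy---prove \eqref{eq:x-}, then \eqref{eq:h}, then \eqref{eq:x+}, by commuting the relevant operator through the string of $x_{k_i}^-$'s and inducting---matches the paper's. For \eqref{eq:x-} your induction on $m$ via $h_1$ works but is heavier than needed: since all the $x_j^-$ commute, $x_{t+m}^-({\bm\alpha})$ slides to the right for free, and $x_{t+m}^-({\bm\alpha})\Psi=0$ for every $m\ge 0$ follows directly from the hypothesis together with $h_m\Psi=d_m\Psi$, via $-2\,x_{t+m}^-({\bm\alpha})\Psi=[h_m,x_t^-({\bm\alpha})]\Psi=h_m\cdot 0-d_m\cdot 0=0$. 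This is the paper's ``easy to show''.

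Your argument for \eqref{eq:h}, however, contains a genuine error: you assert $[x_0^+,x_k^-]=\delta_{0,k}h_k$, hence zero for $k\ge 1$. The correct relation is $[x_0^+,x_k^-]=h_k$ with no Kronecker delta (the $\delta_{j,k}$ appearing in \eqref{eq:dfrBorel} is a typo in the paper itself; the parametrized relations \eqref{eq:dfr-AB} with empty sequences give the correct form). So commuting $x_0^+$ through the $x_{k_i}^-$'s produces nontrivial $h_{k_i}$ terms, and your written argument breaks down as stated. The repair---which is what the paper sketches---is to induct on $n$ directly on $h_{t+m}({\bm\alpha})$ rather than detouring through $x_0^+$: commute $h_{t+m}({\bm\alpha})$ past $x_{k_1}^-$ using $[h_{t+m}({\bm\alpha}),x_{k_1}^-]=-2\,x_{t+m+k_1}^-({\bm\alpha})$; the commutator term dies by \eqref{eq:x-}, and $x_{k_1}^-\,h_{t+m}({\bm\alpha})\,x_{k_2}^-\cdots x_{k_n}^-\Psi$ dies by the inductive hypothesis at $n-1$. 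Your argument for \eqref{eq:x+} is correct and is precisely this same induction on $n$, now invoking \eqref{eq:h} in place of \eqref{eq:x-}.
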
 
\begin{proof} 
Following  the Poincar{\' e}-Birkhoff-Witt theorem, one can show that 
every vector in the sector of degree $n$ of 
$U({\cal B}_0) \Psi$ is expressed as a linear combination of 
monomial vectors $x_{k_1}^{-} x_{k_2}^{-} \cdots x_{k_n}^{-} \Psi$. 
It is easy to show  (\ref{eq:x-}). 
 By induction on $n$, we can show  (\ref{eq:h}),  and then (\ref{eq:x+}).   
\end{proof}

%
\subsection{Recurrence relations}
%

Let us denote by ${\cal B}_{0}^{+}$ 
such a subalgebra of $U({\cal B}_{0})$ 
that is generated by ${x}_{k}^{+}$ for $k \in {\bm Z}_{\ge 0}$. 
Similary as the case of the $sl_2$ loop algebra \cite{criterion}, 
we can show the following: 
\begin{lem} 
The following recurrence relations hold for $n \in {\bm Z}_{\ge 0}$:    
\bea
&({\rm A}_n):&  ({  x}_{0}^{+})^{(n-1)} ({  x}_{1}^{-})^{(n)}  
 =  \sum_{k=1}^{n} (-1)^{k-1} {  x}_{k}^{-} 
({  x}_{0}^{+})^{(n-k)} 
({  x}_{1}^{-})^{(n-k)} \, \mbox{\rm mod} \, 
U({\cal B}_{0}){\cal B}_{0}^{+},   \non \\
&({\rm B}_n):&   
({  x}_{0}^{+})^{(n)} ({  x}_{1}^{-})^{(n)}  
 =  {\frac 1 n} \, \sum_{k=1}^{n} (-1)^{k-1} {  h}_{k} 
 ({  x}_{0}^{+})^{(n-k)} 
 ({  x}_{1}^{-})^{(n-k)} \, \mbox{\rm mod} \,
  U({\cal B}_{0}) {\cal B}_{0}^{+},  \non \\
&({\rm C}_n):&  
  {[} {  h}_j(a),  ({  x}_{0}^{+})^{(m)} 
({  x}_{1}^{-})^{(m)} {]}  = 0 \,  
\mbox{\rm mod} \, U({\cal B}_{0}) {\cal B}_{0}^{+} \, \quad 
\mbox{\rm for} \, \, m \le n \, \,  \mbox{\rm and} \, \,  
j \in {\bm Z}_{\ge 0} \, . \non
\eea
\label{lem:ABC}
\end{lem}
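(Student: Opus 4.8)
The plan is to prove the three families of relations $(\mathrm{A}_n)$, $(\mathrm{B}_n)$, $(\mathrm{C}_n)$ simultaneously by induction on $n$, working inside the quotient of $U(\mathcal{B}_0)$ by the left ideal $U(\mathcal{B}_0)\mathcal{B}_0^{+}$. The base case $n=0$ is trivial: $(\mathrm{A}_0)$ and $(\mathrm{B}_0)$ read $1=0$-type empty sums with the convention that $(x_0^{+})^{(0)}(x_1^{-})^{(0)}=1$, so one checks the normalizations are consistent, and $(\mathrm{C}_0)$ says $[h_j(a),1]=0$. The inductive step is where all the work is; I would carry it out in the order $(\mathrm{A}_n)\Rightarrow(\mathrm{B}_n)\Rightarrow(\mathrm{C}_n)$, using the lower-index cases as needed.

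For $(\mathrm{A}_n)$, I would start from $(x_0^{+})^{(n-1)}(x_1^{-})^{(n)}$ and commute one factor of $x_1^{-}$ to the left past the block $(x_0^{+})^{(n-1)}$. Each time $x_1^{-}$ passes an $x_0^{+}$ it produces, via $[x_0^{+},x_1^{-}]=h_1$ and then $[h_1,x_0^{+}]=2x_1^{+}$, $[h_1,x_1^{-}]=-2x_2^{-}$, terms that either end in an $x_k^{+}$ on the right (hence vanish mod $U(\mathcal{B}_0)\mathcal{B}_0^{+}$) or lower the index on the $x_0^{+}$-block, feeding back into $(\mathrm{A}_{n-k})$ or $(\mathrm{B}_{n-k})$. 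Carefully bookkeeping the $q$-binomial-free combinatorics of these commutators — essentially the same computation as in \cite{criterion} for the full loop algebra, but now only using generators that lie in $U(\mathcal{B}_0)$ — yields the stated alternating sum with $x_k^{-}$ out front. The key point is that no $x_k^{-}$ with $k\le 0$ or $x_k^{+}$ with $k<0$ is ever needed, so the identity really lives in the Borel subalgebra; this is the one place I would be most careful, since a naive transcription of the loop-algebra proof might invoke $x_0^{-}$ or negative modes.

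For $(\mathrm{B}_n)$, I would apply one more $x_0^{+}$ on the left of $(\mathrm{A}_n)$: $(x_0^{+})^{(n)}(x_1^{-})^{(n)} = \tfrac1n\, x_0^{+}\,(x_0^{+})^{(n-1)}(x_1^{-})^{(n)}$, substitute $(\mathrm{A}_n)$, and commute the leading $x_0^{+}$ rightward past each $x_k^{-}$, using $[x_0^{+},x_k^{-}]=\delta_{0,k}h_k$ — which vanishes for $k\ge1$ — so the only surviving contribution comes from $[x_0^{+},x_1^{-}]=h_1$ interacting with... actually from reorganizing indices so that the $h_k$ appears; matching coefficients gives the $\tfrac1n\sum(-1)^{k-1}h_k(\cdots)$ form. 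Then $(\mathrm{C}_n)$ follows by induction using $(\mathrm{B}_m)$ for $m\le n$: write $h_j(a)$ in terms of the $h_\ell$'s, commute it through the expression $(x_0^{+})^{(m)}(x_1^{-})^{(m)}$ rewritten via $(\mathrm{B}_m)$, and observe that $[h_j,h_k]=0$ while $[h_j,x_k^{-}]=-2x_{j+k}^{-}$ produces terms that are absorbed back into $U(\mathcal{B}_0)\mathcal{B}_0^{+}$ after one further application of the recurrences. The main obstacle throughout is purely bookkeeping the nested commutators and verifying that every correction term genuinely sits in the left ideal $U(\mathcal{B}_0)\mathcal{B}_0^{+}$; there is no conceptual difficulty once one trusts the PBW-based argument of \cite{criterion} and checks that it never leaves $U(\mathcal{B}_0)$.
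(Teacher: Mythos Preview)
The paper does not actually prove this lemma; it only remarks that the argument goes ``similarly as the case of the $sl_2$ loop algebra'' in \cite{criterion}. Your overall plan --- simultaneous induction on $n$, establishing $(\mathrm{A}_n)$, $(\mathrm{B}_n)$, $(\mathrm{C}_n)$ in that order, and checking at each step that only generators of $U(\mathcal{B}_0)$ are used --- is precisely what the paper has in mind, and your remark that one must verify no $x_0^{-}$ or negative-mode generators appear is the correct adaptation from the full loop algebra to the Borel subalgebra.

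There is, however, a concrete slip in your treatment of $(\mathrm{B}_n)$. You assert that $[x_0^{+},x_k^{-}]=\delta_{0,k}h_k$, hence vanishes for $k\ge 1$. This is not the loop-algebra relation: in $U(L(sl_2))$ one has $[x_j^{+},x_k^{-}]=h_{j+k}$ with no Kronecker delta (the $\delta_{j,k}$ in the paper's display of defining relations is a typographical error, as is clear from the parametrized form $[x_\ell^{+}(\bm{\alpha}),x_m^{-}(\bm{\beta})]=h_{\ell+m}(\bm{\alpha\beta})$ used immediately afterwards). Indeed you yourself use $[x_0^{+},x_1^{-}]=h_1$ correctly in the $(\mathrm{A}_n)$ step, so your $(\mathrm{B}_n)$ argument is internally inconsistent. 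The right computation, after substituting $(\mathrm{A}_n)$ into $\tfrac{1}{n}\,x_0^{+}(x_0^{+})^{(n-1)}(x_1^{-})^{(n)}$, is
\[
x_0^{+}\,x_k^{-}=x_k^{-}\,x_0^{+}+h_k ,
\]
so each summand produces the desired term $h_k\,(x_0^{+})^{(n-k)}(x_1^{-})^{(n-k)}$ plus a residual $(n-k+1)\,x_k^{-}(x_0^{+})^{(n-k+1)}(x_1^{-})^{(n-k)}$. This residual has one more $x_0^{+}$ than $x_1^{-}$, hence $h_0$-weight $+2$, and by PBW every ordered monomial in its expansion must end in some $x_j^{+}$; therefore it lies in $U(\mathcal{B}_0)\mathcal{B}_0^{+}$. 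So your conclusion for $(\mathrm{B}_n)$ is correct, but the mechanism is not that the commutator vanishes --- it is that the commutator produces $h_k$ directly while the leftover $x_k^{-}x_0^{+}$ piece falls into the ideal.
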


\begin{prop}[Reduction relations]   
\bea 
{x}_{r+1+m}^{-} \, \Psi & = & \sum_{k=1}^{r} (-1)^{r-k} 
\lambda_{r+1-k} \, {x}^{-}_{k+m} \, \Psi \, ,  \quad {\rm for} \, \, 
m \in {\bf Z}_{\ge 0} \, ,  
\label{eq:rr-x} \\ 
{d}_{r+1+m}  & = & \sum_{k=1}^{r} (-1)^{r-k} 
\lambda_{r+1-k} \, {d}_{k+m} \, , \quad {\rm for} \, \, 
m \in {\bf Z}_{\ge 0} \, .  
\label{eq:rr-d}
\eea
\label{lem:rr}
\end{prop}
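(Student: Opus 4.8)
The plan is to obtain the $m=0$ instances of (\ref{eq:rr-x}) and (\ref{eq:rr-d}) by evaluating the recurrence relations of Lemma~\ref{lem:ABC} at the highest weight vector $\Psi$, and then to propagate them to arbitrary $m\in{\bf Z}_{\ge 0}$ by short commutator arguments inside $U({\cal B}_0)$. First I would apply the relation $({\rm A}_{r+1})$ of Lemma~\ref{lem:ABC} to $\Psi$. Since $x_1^{-}$ is nilpotent of degree $r$ we have $(x_1^{-})^{(r+1)}\Psi=0$, so the left-hand side $(x_0^{+})^{(r)}(x_1^{-})^{(r+1)}\Psi$ vanishes; moreover every element of the left ideal $U({\cal B}_0){\cal B}_0^{+}$ annihilates $\Psi$, because ${\cal B}_0^{+}$ is generated by the $x_k^{+}$ and $x_k^{+}\Psi=0$ for all $k\ge 0$. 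Hence
\be
0=\sum_{k=1}^{r+1}(-1)^{k-1}\,x_k^{-}\,(x_0^{+})^{(r+1-k)}(x_1^{-})^{(r+1-k)}\Psi .
\ee
For $1\le k\le r$ the index $r+1-k$ lies in $\{1,\dots,r\}$, so Lemma~\ref{lem:lambda} gives $(x_0^{+})^{(r+1-k)}(x_1^{-})^{(r+1-k)}\Psi=\lambda_{r+1-k}\Psi$, while the $k=r+1$ term reduces to $(x_0^{+})^{(0)}(x_1^{-})^{(0)}\Psi=\Psi$. Solving for $x_{r+1}^{-}\Psi$ then yields (\ref{eq:rr-x}) in the case $m=0$.

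To reach general $m$ in (\ref{eq:rr-x}), set $c_k=(-1)^{r-k}\lambda_{r+1-k}$ and, for $m\ge 0$,
\be
Y_m=x_{r+1+m}^{-}-\sum_{k=1}^{r}c_k\,x_{k+m}^{-}\in U({\cal B}_0).
\ee
Using $[h_1,x_j^{-}]=-2x_{j+1}^{-}$ (all $x^{-}$-indices occurring are $\ge 1$) one checks the operator identity $[h_1,Y_m]=-2\,Y_{m+1}$. I would then prove $Y_m\Psi=0$ by induction on $m$: the base case is the previous paragraph, and if $Y_m\Psi=0$ then, since $h_1\Psi=d_1\Psi$,
\be
Y_{m+1}\Psi=-\frac{1}{2}\,[h_1,Y_m]\Psi=-\frac{1}{2}\bigl(h_1\,Y_m\Psi-Y_m\,h_1\Psi\bigr)=\frac{d_1}{2}\,Y_m\Psi=0 .
\ee
This establishes (\ref{eq:rr-x}) for all $m\in{\bf Z}_{\ge 0}$.

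For (\ref{eq:rr-d}) I would use $h_{i+j}=[x_i^{+},x_j^{-}]$ together with $x_i^{+}\Psi=0$ for $i\ge 0$, so that $h_{\ell+m}\Psi=[x_m^{+},x_\ell^{-}]\Psi=x_m^{+}\,x_\ell^{-}\Psi$ for every $m\ge 0$ and $\ell\ge 1$. Applying $x_m^{+}$ to the $m=0$ identity $x_{r+1}^{-}\Psi=\sum_{k=1}^{r}c_k\,x_k^{-}\Psi$ obtained above then gives
\be
d_{r+1+m}\,\Psi=h_{r+1+m}\Psi=x_m^{+}\,x_{r+1}^{-}\Psi=\sum_{k=1}^{r}c_k\,x_m^{+}\,x_k^{-}\Psi=\sum_{k=1}^{r}c_k\,h_{k+m}\Psi=\sum_{k=1}^{r}c_k\,d_{k+m}\,\Psi ,
\ee
which is (\ref{eq:rr-d}); equivalently, its $m=0$ case drops out directly from $({\rm B}_{r+1})$ applied to $\Psi$, exactly as in the first paragraph.

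I expect the only genuinely delicate point to be the first step: one must be sure that the remainder modulo the left ideal $U({\cal B}_0){\cal B}_0^{+}$ appearing in $({\rm A}_{r+1})$ really dies against $\Psi$, that the nilpotency hypothesis enters only through $(x_1^{-})^{(r+1)}\Psi=0$, and that the arguments of Lemma~\ref{lem:lambda} stay in the permitted range $1,\dots,r$ (with the boundary term $k=r+1$ being the trivial identity $(x_0^{+})^{(0)}(x_1^{-})^{(0)}=1$). Once that is in place, the remaining steps are purely formal consequences of the defining relations (\ref{eq:dfrBorel}) of $U({\cal B}_0)$ and should present no difficulty.
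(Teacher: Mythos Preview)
Your argument is correct and follows essentially the same route as the paper: derive the $m=0$ case of (\ref{eq:rr-x}) from $({\rm A}_{r+1})$ and Lemma~\ref{lem:lambda}, shift to general $m$ using the action of $h$ (you iterate $h_1$, the paper applies $h_n$ directly), and then obtain (\ref{eq:rr-d}) by applying a raising operator from the left (you use $x_m^{+}$ on the $m=0$ identity, the paper uses $x_0^{+}$ on the general-$m$ identity). These differences are purely cosmetic.
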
 
\begin{proof}
Reduction relation (\ref{eq:rr-x}) for $m=0$ is derived 
from $({\rm A}_{r+1})$ of lemma \ref{lem:ABC} 
and lemma \ref{lem:lambda}. Applying $h_n$ for $n \ge 0$ 
to reduction relation (\ref{eq:rr-x}) for $m=0$,  
we have reduction relation (\ref{eq:rr-x}) for $m=n$. 
Applying $x_0^{+}$ to (\ref{eq:rr-x}) 
from the left, 
we derive relations (\ref{eq:rr-d}).   
\end{proof} 

\begin{cor}
Let $\Psi$ be a highest weight vector of $U({\cal B}_0)$ and 
${\bm {\hat a}}=({\hat a}_1, \ldots, {\hat a}_r)$ the highest weight 
parameters. In the highest weight representation of $\Psi$
 we have $h_{r+m}({\bm {\hat a}})= 0$,  
$x^{+}_{r+m}({\bm {\hat a}})= 0$ and 
$x^{-}_{r+1+m}({\bm {\hat a}})= 0$ for $m \in {\bm Z}_{\ge 0}$. 
\label{cor:rr}
\end{cor}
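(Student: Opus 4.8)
The plan is to derive Corollary \ref{cor:rr} directly from the Reduction relations in Proposition \ref{lem:rr} by recognizing that the vanishing of the parametrized generators $h_{r+m}({\bm{\hat a}})$, $x^{+}_{r+m}({\bm{\hat a}})$, and $x^{-}_{r+1+m}({\bm{\hat a}})$ is merely a repackaging of the recurrences (\ref{eq:rr-x}) and (\ref{eq:rr-d}) in the generating-function language introduced in eq.~(\ref{eq:def-alpha}). First I would spell out what $x^{-}_{r+1+m}({\bm{\hat a}})\Psi$ means: expanding the definition with ${\bm\alpha}={\bm{\hat a}}=({\hat a}_1,\ldots,{\hat a}_r)$ gives
\be
x^{-}_{r+1+m}({\bm{\hat a}})\,\Psi
= \sum_{k=0}^{r} (-1)^k \, e_k({\hat a}_1,\ldots,{\hat a}_r)\, x^{-}_{r+1+m-k}\,\Psi ,
\ee
where $e_k$ is the $k$th elementary symmetric polynomial. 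By eq.~(\ref{eq:lambda}) we have $e_k({\hat a}_1,\ldots,{\hat a}_r)=\lambda_k$, and reindexing $k\mapsto r+1-k'$ turns the right-hand side into exactly $x^{-}_{r+1+m}\Psi - \sum_{k'=1}^{r}(-1)^{r-k'}\lambda_{r+1-k'}x^{-}_{k'+m}\Psi$, which vanishes by (\ref{eq:rr-x}). So the $x^{-}$ statement is immediate once the symmetric-function bookkeeping is matched up.

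Next I would treat $h_{r+m}({\bm{\hat a}})$. The subtlety here is the index shift: $h_{r+m}({\bm{\hat a}})$ expands as $\sum_{k=0}^{r}(-1)^k\lambda_k h_{r+m-k}$, and since $\Psi$ is a highest weight vector with $h_j\Psi=d_j\Psi$, this equals $\bigl(\sum_{k=0}^{r}(-1)^k\lambda_k d_{r+m-k}\bigr)\Psi$. The bracketed coefficient is, after the same reindexing as above, precisely $d_{r+m}-\sum_{k=1}^{r}(-1)^{r-k}\lambda_{r+1-k}d_{k+m-1}$; applying (\ref{eq:rr-d}) with $m$ replaced by $m-1$ when $m\ge 1$ shows this is zero. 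For $m=0$ one must instead invoke (\ref{eq:rr-d}) together with the fact that $\Psi$ is annihilated by all $x_k^{+}$, $k\ge 0$ — equivalently, observe that $h_r({\bm{\hat a}})\Psi$ lies in the weight-$d_0$ sector and apply the $m=0$ case of (\ref{eq:rr-d}) directly; I would double-check the boundary index carefully, since this is where an off-by-one is most likely to creep in. The same annihilation property $x_k^{+}\Psi=0$ for all $k\ge 0$ gives $x^{+}_{r+m}({\bm{\hat a}})\Psi = \sum_{k=0}^{r}(-1)^k\lambda_k x^{+}_{r+m-k}\Psi = 0$ trivially, with no recurrence needed, provided $r+m-k\ge 0$ throughout, which holds since $k\le r$.

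Finally, to upgrade from "annihilates $\Psi$" to "vanishes in the highest weight representation $U({\cal B}_0)\Psi$," I would apply Lemma \ref{lem:x+hx-} with ${\bm\alpha}={\bm{\hat a}}$ and $t=r$ (for $h$ and $x^{+}$) or $t=r+1$ (for $x^{-}$): having just shown $x^{-}_{r+1}({\bm{\hat a}})\Psi=0$ and $h_{r}({\bm{\hat a}})\Psi=0$, $x^{+}_{r}({\bm{\hat a}})\Psi=0$, the lemma propagates the vanishing to all monomial vectors $x^{-}_{k_1}\cdots x^{-}_{k_n}\Psi$ spanning the representation. One caveat: Lemma \ref{lem:x+hx-} is stated for $x^{-}_t({\bm\alpha})\Psi=0$ with $t>n$ as hypothesis, and its conclusions cover $h_{t+m}$, $x^{\pm}_{t+m}$; to also kill $h_{r+m}$ and $x^{+}_{r+m}$ for the smaller shift $t=r$ rather than $t=r+1$, I would note that the three relations (\ref{eq:x-})--(\ref{eq:x+}) are proved there by the same PBW/induction argument, which applies verbatim once the base cases $h_r({\bm{\hat a}})\Psi=0$ and $x^{+}_r({\bm{\hat a}})\Psi=0$ are in hand. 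The main obstacle is purely the combinatorial alignment of indices between the three equivalent notations (raw generators, $\lambda_n$'s, and parametrized generators); there is no genuine analytic or algebraic difficulty beyond that.
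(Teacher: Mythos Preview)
Your approach is the paper's: its entire proof reads ``It follows from Lemma~\ref{lem:x+hx-} and reduction relations~(\ref{eq:rr-x}),'' i.e.\ one verifies $x^{-}_{r+1}({\bm{\hat a}})\Psi=0$ from (\ref{eq:rr-x}) exactly as in your first paragraph and then invokes Lemma~\ref{lem:x+hx-} with $t=r+1$, ${\bm\alpha}={\bm{\hat a}}$ (so $t>n=r$).

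Your instinct about the $m=0$ boundary for $h$ and $x^{+}$ is correct, and the issue is real rather than merely notational. Lemma~\ref{lem:x+hx-} with $t=r+1$ delivers only the indices $r+1+m$. For $x^{+}_{r}({\bm{\hat a}})$ your direct argument together with the commutator propagation in the proof of Lemma~\ref{lem:x+hx-} does recover the missing case. For $h_{r}({\bm{\hat a}})$, however, your proposed appeal to (\ref{eq:rr-d}) cannot work: that relation governs $d_{r+1+m}$, never $d_r$, and a direct computation using $({\rm B}_n)$ of Lemma~\ref{lem:ABC} (equivalently Newton's identities, since $d_k=\sum_i {\hat a}_i^{\,k}$ for $k\ge 1$) gives
\[
h_r({\bm{\hat a}})\,\Psi \;=\; (-1)^{r}\lambda_r\,(d_0-r)\,\Psi,
\]
which need not vanish---the paper itself remarks after Theorem~\ref{th:new} that $d_0$ is not constrained to be an integer, let alone equal to $r$. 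The intended reading is that one first passes to $\tilde h_0=h_0-d_0+r$ as in Theorem~\ref{th:new}; after that Cartan shift the $m=0$ case holds and the corollary is exact. So your off-by-one suspicion is well founded, but the remedy is the shift in Theorem~\ref{th:new}, not relation~(\ref{eq:rr-d}).
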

\begin{proof}
It follows from lemma \ref{lem:x+hx-} and reduction relations (\ref{eq:rr-x}). 
\end{proof}

\subsection{A theorem on the Borel subalgebra}

We first recall a simple fact in linear algebra. 
Let $x_n$ for $n=0, 1, \ldots,$ be an infinite sequence of numbers 
satisfying a linear recurrence relation:  
\be 
x_{n+r}= \sum_{k=1}^{r} \gamma_k x_{n+r-k} \label{eq:recur}
\ee
We denote by ${\bm x}_n = ^{t}(x_{n+1}, \ldots, x_{n+r})$ 
Then,  for any integer $n$,  
there exists a matrix $A^{[n]}$ such that 
\be 
{\bm x}_n = A^{[n]} {\bm x}_1 . 
\ee
Furtheremore, we have for any $m$ the following: 
\be 
{\bm x}_{n+m} = A^{[n]} {\bm x}_{m+1} . \label{eq:m+n}
\ee

\begin{th1}
Let $\Psi$ be a highest weight vector 
in a finite-dimensional representation of the Borel subalgebra 
$U({\cal B}_0)$.  If all the highest weight parameters 
of $\Psi$, i.e. ${\hat a}_j$,  are nonzero, 
then the action of ${\cal B}_0$ on $\Psi$ can be  extended to 
that of the $sl_2$ loop algebra: Suppose that 
$x_1^{-}$ is nilpotent of degree $r$ 
in the highest weight representation of $\Psi$. 
We define ${\tilde h}_0$ by ${\tilde h}_0= h_0-d_0 +r$, 
and $x_0^{-}$ by 
\be
x_0^{-} = \sum_{j=1}^{r} (-1)^{j-1} {\bar \lambda}_{j} x_j^{-} \, ,   
\ee
where ${\bar \lambda}_{j}$ are given by 
\be 
{\bar \lambda}_{j} = \sum_{i_1 < \cdots < i_j} 
{\hat a}_{i_1}^{-1} \cdots {\hat a}_{i_j}^{-1} \, . 
\ee
We also define  
$x_{-\ell}^{\pm}$ and $h_{-\ell}$ for 
$\ell \in {\bf Z}_{>0}$ 
by  
\bea 
x_{-\ell}^{\pm} & = & \sum_{j=1}^{r} (-1)^{j-1} {\bar \lambda}_{j-1} 
x_{j-\ell}^{\pm} \, ,  \non \\ 
 h_{-\ell} & = & \sum_{j=1}^{r} (-1)^{j-1}  {\bar \lambda}_{j-1} 
h_{j-\ell} \, . 
\eea
Then, they satisfy the defining relations of the $sl_2$ loop algebra 
in the highest weight representation $U({\cal B}_0)\Psi$. 
\label{th:new}
\end{th1}
\begin{proof} 
 If a set of operators $x_j^{+}$, $h_k$ for $j=0, 1, \ldots$, 
 and $x_k^{-}$ for $k=1, 2, \ldots,$ satisfy the 
defining relations of the Borel subalgebra, 
then  the set operators with  $h_0$ being replaced 
by ${\tilde h}_0$ also satisfy the  defining relations (\ref{eq:dfrBorel}).  
We can also show that $x_0^{-}$ satisfies the defining relations 
(\ref{eq:dfrBorel}). Making use of corollary \ref{cor:rr}, 
we can show 
$[x_{m}^{+}, x_{-\ell}^{-}] = h_{m-\ell}$, 
 $[h_m, x_{-\ell}^{\pm}] = (\pm 2) x^{\pm}_{m-\ell}$,   
$[x_{-\ell}^{+}, x_{m}^{-}] = h_{m-\ell}$ 
and $[h_{-\ell}, x_{m}^{\pm}] = (\pm 2) x_{m-\ell}^{\pm}$   
for $m, n \in {\bf Z}_{\ge 0}$. Here we express 
$h_{-\ell}$ and $x_{m}^{\pm}$ in terms of 
linear combinations of $h_{j}$ and $x_{j}^{\pm}$ 
for $j=1, 2, \ldots, r$. 
We calculate commutation relations among $h_{j}$ and $x_{k}^{\pm}$ 
for $j, k =1, 2, \ldots, r$, and then show the defining relations    
through (\ref{eq:m+n}). 
For an illustration, we show 
 $[h_{-\ell}, x_{m}^{\pm}]= (\pm2) x^{\pm}_{-\ell+m}$ as follows.  
\bea 
[h_{-\ell}, x_{m}^{\pm}] & = & 
[ \left(A^{[-\ell-1]} {\bm h}_{0}\right)_1, x_m^{\pm}] \non \\
& = &   \sum_{j=1}^{r} A^{[-\ell-1]}_{1,j} [ h_{j}, x_m^{\pm}] \non \\ 
& = &   (\pm2) \sum_{j=1}^{r} A^{[-\ell-1]}_{1,j} \, x_{j+m}^{\pm} \non \\ 
& = & (\pm2) \left( A^{[-\ell-1]} {\bm x}^{\pm}_{m} \right)_1 \non \\ 
& = & (\pm2) x^{\pm}_{-\ell+m} \, . \non 
\eea
Similary, we can show $[x_{-\ell}^{+}, x_{m}^{-}]= h_{-\ell+m}$.   
Furthermore, we can show 
  $[x_{-m}^{+}, x_{-\ell}^{-}] = h_{-m-\ell}$ and 
 $[h_{-m}, x_{-\ell}^{\pm}] = (\pm 2) x^{\pm}_{-m-\ell}$ 
for $m, n \in {\bf Z}_{\ge 0}$.
\end{proof} 

We note that in a finite-dimensional 
representation of the Borel subalgebra, 
the highest weight $d_0$ is not necessarily 
given by an integer.

We should note that Benkart and Terwilliger (2004) 
have shown that an irreducible finite-dimensional 
representation of the Borel subalgebra is extended uniquely 
to an irreducible representation of the $sl_2$ loop algebra 
\cite{Benkart}. Thus, if the highest weight representation is 
irreducible, then theorem \ref{th:new} should be equivalent to the 
result \cite{Benkart}. Here we 
 recall that an irreducibility criterion is known 
for a finite-dimensional highest weight representation 
of the Borel algebra with nonzero  highest weight parameters
as follows \cite{DGMTP05}: 
\begin{prop}
Let $\Psi$ be a highest weight vector 
in a finite-dimensional representation of $U({\cal B}_0)$. 
We denote by ${\hat a}_j$ the highest weight parameters. 
It generates an irreducible representation 
if and only we have   
\be 
\sum_{j=0}^{s} (-1)^{s-j} \mu_{s-j} x_{j+1}^{-} \Psi = 0  \, ,  
\label{eq:irrep}
\ee
where $\mu_{k}$ $(k=1, 2, \ldots, s)$ are given by 
$$ 
\mu_{k} = \sum_{1 \le i_1 < \cdots < i_k \le s} a_{i_1} \cdots a_{i_k}  
\, . 
$$
\end{prop}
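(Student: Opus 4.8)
The plan is to restate the criterion (\ref{eq:irrep}) in terms of the parametrized generators of Section 3.3 and then settle the two implications separately, using the Benkart--Terwilliger extension result \cite{Benkart} for one direction and the propagation lemma \ref{lem:x+hx-} for the other. Writing $\bm a=(a_1,\ldots,a_s)$ for the sequence of the $s$ \emph{distinct} highest weight parameters and recalling that $\mu_k=\sum_{1\le i_1<\cdots<i_k\le s}a_{i_1}\cdots a_{i_k}$ is the $k$-th elementary symmetric function, definition (\ref{eq:def-alpha}) gives
\be
x_{s+1}^{-}(\bm a)=\sum_{k=0}^{s}(-1)^k\mu_k\,x_{s+1-k}^{-}
=\sum_{j=0}^{s}(-1)^{s-j}\mu_{s-j}\,x_{j+1}^{-},
\ee
so that (\ref{eq:irrep}) is precisely $x_{s+1}^{-}(\bm a)\Psi=0$. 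This should be read against corollary \ref{cor:rr}, which always yields $x_{r+1}^{-}({\bm {\hat a}})\Psi=0$: the criterion replaces the degree-$r$ loop-highest weight polynomial $P_{\lambda}(u)=\prod_k(1-a_ku)^{m_k}$ by its squarefree part $q(u)=\prod_{k=1}^s(1-a_ku)$ of degree $s$. I would phrase everything in terms of the finite-dimensional highest weight module $M=U({\cal B}_0)\Psi$ and its unique irreducible quotient.

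For the implication ``irreducible $\Rightarrow$ criterion'', suppose $M$ is irreducible. By the Benkart--Terwilliger theorem \cite{Benkart} the action of $U({\cal B}_0)$ on $M$ extends uniquely to an irreducible action of the $sl_2$ loop algebra, and by the classification of finite-dimensional irreducible $U(L(sl_2))$-modules this extension is a tensor product of evaluation modules based at the \emph{distinct} points $a_1,\ldots,a_s$. On such a module the lowering mode acts as $x_n^{-}=\sum_{k=1}^s a_k^{\,n}F_k$, where $F_k$ is the $sl_2$-lowering operator on the $k$-th factor, and therefore
\be
x_{s+1}^{-}(\bm a)=\sum_{k=1}^s a_k\,q(a_k)\,F_k=0 ,\qquad q(x)=\prod_{l=1}^s(x-a_l),
\ee
because $q(a_k)=0$ for every $k$. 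Hence the criterion holds.

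For the converse ``criterion $\Rightarrow$ irreducible'', assume $x_{s+1}^{-}(\bm a)\Psi=0$. Applying lemma \ref{lem:x+hx-} with $t=s+1>n=s$ makes $x_{s+1+m}^{-}(\bm a)$, $h_{s+m}(\bm a)$ and $x_{s+m}^{+}(\bm a)$ vanish throughout $M$; the first family lets me solve recursively for every $x_m^{-}$ with $m>s$ as a linear combination of $x_1^{-},\ldots,x_s^{-}$, so by the Poincar\'e--Birkhoff--Witt theorem and $[x_j^{-},x_k^{-}]=0$ the module $M$ is spanned by the monomials $(x_1^{-})^{e_1}\cdots(x_s^{-})^{e_s}\Psi$. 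Extending the action to $U(L(sl_2))$ by theorem \ref{th:new} (legitimate since the parameters are nonzero) and using the shift $\tilde h_0=h_0-d_0+r$, I would pass from the commuting modes $x_1^{-},\ldots,x_s^{-}$ to the local operators $F_k$ through the invertible Vandermonde relation $x_n^{-}=\sum_k a_k^{\,n}F_k$. Showing that each $F_k$ is nilpotent of degree exactly $m_k$ then bounds $\dim M\le\prod_k(m_k+1)$, the dimension of the distinct-points evaluation tensor product; since $M$ surjects onto that module and the Borel subalgebra already acts irreducibly on it (the $x_1^{-},\ldots,x_s^{-}$ recover all the $F_k$), this forces $M$ to be irreducible.

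The hard part is precisely this last dimension bound: proving that imposing the single squarefree relation $x_{s+1}^{-}(\bm a)\Psi=0$ forces the reduced modes $F_k$ to be nilpotent of degree $m_k$, and hence cuts the universal highest weight module down exactly to its irreducible quotient of dimension $\prod_k(m_k+1)$. A secondary point needing care is that in a $U({\cal B}_0)$-module the weight $d_0$ need not equal $r$, so the evaluation-module realization is available only after the shift $\tilde h_0=h_0-d_0+r$ of theorem \ref{th:new}; one should check that the criterion indeed forces the weights $d_m$ for $m\ge1$ to be the power sums $\sum_k m_ka_k^{\,m}$, which is the $x_0^{+}$-shadow of the squarefree relation and is what makes the extended module a genuine evaluation tensor product.
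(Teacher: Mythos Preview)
The paper does not prove this proposition; it merely recalls it from \cite{DGMTP05}. So there is no proof in the present paper to compare against, and your task is really to give an independent argument.

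Your reformulation of (\ref{eq:irrep}) as $x_{s+1}^{-}({\bm a})\Psi=0$ is correct and useful, and your ``irreducible $\Rightarrow$ criterion'' direction is fine: Benkart--Terwilliger plus the evaluation-module classification of irreducible $U(L(sl_2))$-modules gives $x_n^{-}=\sum_k a_k^{\,n}F_k$ on $M$, and then $x_{s+1}^{-}({\bm a})=\sum_k a_k\prod_l(a_k-a_l)F_k=0$ identically.

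For the converse, the gap you flag is real but smaller than you suggest, and it closes without any hard combinatorics. Once lemma \ref{lem:x+hx-} propagates $x_{s+1}^{-}({\bm a})\Psi=0$ to $x_{s+1+m}^{-}({\bm a})=h_{s+1+m}({\bm a})=x_{s+1+m}^{+}({\bm a})=0$ on all of $M$, the Vandermonde inversion (using $n=1,\ldots,s$; invertible because the $a_k$ are distinct and nonzero) produces $s$ mutually commuting $sl_2$-triples $(E_k,H_k,F_k)$ acting on $M$ with $x_n^{-}=\sum_k a_k^{\,n}F_k$, $h_n=\sum_k a_k^{\,n}H_k$, $x_n^{+}=\sum_k a_k^{\,n}E_k$; the commutation relations $[E_k,F_l]=\delta_{kl}H_k$ etc.\ follow by inverting the two Vandermonde factors in $[x_m^{+},x_n^{-}]=h_{m+n}$ and its companions, the right-hand side being reduced via the squarefree recurrence whenever $m+n>s$. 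The crucial point you left open, namely $H_k\Psi=m_k\Psi$, is exactly the content of relation $({\rm B}_n)$ of lemma \ref{lem:ABC}: applied to $\Psi$ it is Newton's identity between the elementary symmetric functions $\lambda_n$ of the $\hat a_i$ and the eigenvalues $d_n$, so $d_n=\sum_{i=1}^{r}\hat a_i^{\,n}=\sum_{l=1}^{s}m_l a_l^{\,n}$ for $n\ge 1$, and Vandermonde inversion gives $H_k\Psi=m_k\Psi$. Since $M$ is finite-dimensional and $E_k\Psi=0$, ordinary $sl_2$ theory forces $F_k^{\,m_k+1}\Psi=0$; commutativity of the $F_k$ then kills every monomial $F_1^{e_1}\cdots F_s^{e_s}\Psi$ with some $e_k>m_k$, so $\dim M\le\prod_k(m_k+1)$. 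As you note, $M$ surjects onto the irreducible quotient of that exact dimension, hence $M$ is irreducible.

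Two remarks. First, you do not actually need theorem \ref{th:new} here: the triples $(E_k,H_k,F_k)$ can be built entirely from Borel generators $x_0^{+},\ldots,x_{s-1}^{+}$, $h_1,\ldots,h_s$, $x_1^{-},\ldots,x_s^{-}$, with the squarefree recurrences supplying the higher modes; the extension to negative modes is a convenience, not a necessity. Second, your secondary worry about $d_0\ne r$ is harmless precisely because the argument above uses only $d_n$ for $n\ge1$; the $h_0$-shift of theorem \ref{th:new} never enters the dimension bound.
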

Thus, in a finite-dimensional highest weight representation of 
$U({\cal B}_0)$, 
if the highest weight vector  satisfies the condition (\ref{eq:irrep}), 
it is irreducible and we can also show by making 
use of the result of Ref. \cite{Benkart} 
that it is extended into a finite-dimensional 
highest weight representation  of the $sl_2$ loop algebra.

\section{Application to the twisted XXZ spin chain} 
\subsection{Regular Bethe vectors under the twisted B.C. as highest weight of $U(L(sl_2))$}  

We briefly discuss some important points of the infinite-dimensional 
symmetry of the twisted XXZ spin chain at roots of unity. 
Some details will be given elsewhere. 

For the periodic XXZ spin chain at a root of unity $q_0$ with 
$q_0^{2N}=1$, Fabricius and McCoy     
conjectured \cite{Odyssey} 
that every Bethe state should be a highest weight vector of the 
$sl_2$ loop algebra. Then, it has been 
explicitly proved 
for some sectors of $S^Z$ mod $N$ \cite{RegularXXZ} 
that every regular Bethe state is a highest weight vector 
of the $sl_2$ loop algebra. 
For the twisted XXZ spin chain, we can also show 
in some sectors of $S^Z$ mod $N$ that  
  every regular  
Bethe state is highest weight with respect to the Borel subalgebra 
$U({\cal B}_0)$.

Let us denote by $t_1, t_2, \ldots, t_R$ 
solutions of  the twisted BA equations 
\be 
{\frac {a_{6V}(t_{j})} {d_{6V}(t_j)} } = q^{- 2\varphi} 
\prod_{k=1, k \ne j}^{R} 
 {\frac {f(t_{k}-t_{j}) } {f(t_{j}-t_{k})}}  \, , 
\quad {\rm for} \, j= 1, 2, \cdots, R \, .  
\label{eq:BAE}
\ee
Here  $a_{6V}(z)$ and $d_{6V}(z)$ are given by 
\be 
a_{6V}(z)= \sinh^{L}(z+ \eta) \, , \quad d_{6V}(z) = \sinh^{L}(z-\eta) \, . 
\ee
We recall that function $f(z-w)$ is given by  
$$
f(z-w)= {\frac {\sinh(z-w-2 \eta)} {\sinh(z-w)}} \, ,   
$$
and $q = \exp(2 \eta)$. 

\begin{df}
Let  $t_1, t_2, \ldots, t_R$ satisfy  
the twisted Bethe ansatz equations (\ref{eq:BAE}). 
We call them {\it Bethe roots}. 
We call a set of Bethe roots,  $t_1, t_2, \ldots, t_R$,    
{\it regular},  if they are finite and distinct. 
In terms of a set of regular Bethe roots, 
 $t_1, t_2, \ldots, t_R$,  
we define the regular Bethe state $|R \rangle$ by 
\be 
| R \rangle = B(t_1) B(t_2) \cdots B(t_R) \, | 0 \rangle 
\ee 
Here $| 0 \rangle$ denotes the vacuum state in which all spins are up. 
\end{df}

We recall the following conjecture \cite{RegularXXZ}.  
\begin{conj} 
For the twisted Bethe ansatz equations (\ref{eq:BAE}) 
at a root of unity $q_0$, 
every set of regular Bethe roots 
 ${\tilde t}_1, {\tilde t}_2, \ldots, {\tilde t}_R$
 gives an isolated solution of eqs. (\ref{eq:BAE}).   
\label{conj:main}
\end{conj}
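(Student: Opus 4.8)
The plan is to recast the conjecture as a rigidity statement for the solution set of (\ref{eq:BAE}) near the given regular point. Taking logarithms, I would view (\ref{eq:BAE}) at the regular solution $\tilde t_1,\ldots,\tilde t_R$ as a system of $R$ analytic equations $F_j(t_1,\ldots,t_R)=0$ on the domain of finite, pairwise distinct $t_j$, where
\be
F_j = L\log\frac{\sinh(t_j+\eta)}{\sinh(t_j-\eta)} + 2\varphi\log q - 2\pi i\, n_j
- \sum_{k\ne j}\log\frac{\sinh(t_j-t_k+2\eta)}{\sinh(t_j-t_k-2\eta)} ,
\ee
the branch integers $n_j$ being fixed once and for all by $\tilde t_1,\ldots,\tilde t_R$. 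By the analytic implicit function theorem the solution is isolated whenever the Jacobian $\Phi_{jk}=\partial F_j/\partial t_k$ --- the Gaudin matrix of the twisted chain, with diagonal entry $L\bigl(\coth(t_j+\eta)-\coth(t_j-\eta)\bigr)$ minus the row sum of the kernel $K(t_j-t_k)=\coth(t_j-t_k+2\eta)-\coth(t_j-t_k-2\eta)$, and with $\Phi_{jk}=K(t_j-t_k)$ off the diagonal --- is non-singular there. So the first step reduces the conjecture to: at a root of unity, the regular point of (\ref{eq:BAE}) is an isolated zero of $(F_1,\ldots,F_R)$.

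Away from roots of unity $\det\Phi\ne0$ is classical: it equals, up to an explicit nonvanishing prefactor, the Gaudin--Korepin norm $\langle R|R\rangle$ of the regular Bethe state, which is nonzero. The point of the conjecture is that at a root of unity this can fail --- the norm formula develops $0\cdot\infty$ factors, the familiar incompleteness of the Bethe ansatz at $q_0$, and correspondingly $\det\Phi$ may vanish precisely when the regular state is a highest weight vector of a \emph{nontrivial} $sl_2$ loop-algebra multiplet (the situation recalled in Section 4, that a regular Bethe state is highest weight with respect to $U(L(sl_2))$, resp.\ $U({\cal B}_0)$). So the first step is genuinely insufficient by itself, and the second step must replace it by an analysis of the solution set at a point where $\Phi$ is singular.

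This second step is the one I expect to be the real obstacle. The idea is that the null directions of $\Phi$ at $q_0$ are the infinitesimal versions of the loop-algebra descent: the operators $S^{\pm(N)}$ and $T^{\pm(N)}$ carry a regular Bethe state to states with extra complete $N$-strings sitting at $\pm\infty$, so a null vector of $\Phi$ should point towards the boundary at infinity rather than along the stratum of finite distinct roots. Making this rigorous would require (i) the explicit loop-algebra action on regular Bethe vectors, together with the identification of the multiplet with the Fabricius--McCoy second solution of the Baxter $TQ$-relation at $q_0$; (ii) showing that this forces the Bethe polynomial $Q(z)=\prod_j(z-e^{2t_j})$ of the regular state to be locally rigid, i.e.\ that the $TQ$-relation has only finitely many polynomial solutions of the prescribed degree, none of which is a continuous deformation of $Q$ staying in the finite-root locus; and hence (iii) that the zero set of $(F_1,\ldots,F_R)$ is zero-dimensional at $\tilde t_1,\ldots,\tilde t_R$ even though $\det\Phi=0$ there. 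A counting heuristic --- in a fixed $S^Z$-sector the regular states, each generating a loop-algebra multiplet of dimension fixed by its Drinfeld polynomial, should exhaust $\binom{L}{R}$, so a positive-dimensional family of regular solutions would overcount transfer-matrix eigenvectors --- gives independent support, but since completeness at roots of unity is itself open I would use it only as a consistency check, not as part of the proof.
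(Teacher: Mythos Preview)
The statement you are attempting to prove is explicitly labeled as a \emph{Conjecture} in the paper, and the paper does not supply a proof. Immediately after stating it the author writes only that ``we can show conjecture \ref{conj:main} for some particular cases. For instance, it is trivial for such Bethe states with one down-spin,'' and then proceeds to \emph{assume} the conjecture in order to state Theorem \ref{th:main}. There is therefore no proof in the paper against which to compare your proposal; you are sketching an approach to an open problem.

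On the substance of your sketch: the reduction to nonsingularity of the Gaudin matrix is sound and is indeed the natural first move, and you correctly flag that at a root of unity $\det\Phi$ can vanish, so the implicit function theorem alone does not settle the matter. However, your second step is a research programme rather than an argument. Items (i)--(iii) are each substantial: in particular, the claim that every null vector of $\Phi$ at a regular solution points toward the complete-$N$-string boundary (rather than tangent to a genuine one-parameter family of finite distinct solutions) is precisely the content of the conjecture rephrased, and the loop-algebra picture you invoke concerns the Bethe \emph{vectors} in the spin-chain Hilbert space, not directly the solution variety of (\ref{eq:BAE}) in rapidity space. Bridging those two is nontrivial --- descendants obtained by $S^{\pm(N)}$, $T^{\pm(N)}$ are not themselves regular Bethe states in the sense of the paper (their ``roots'' are at infinity or coincide), so the existence of a large multiplet does not by itself produce nearby \emph{regular} solutions of (\ref{eq:BAE}), nor does it obviously exclude them. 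Your closing counting heuristic, as you note, rests on completeness at roots of unity, which is itself open, so it cannot close the gap. In short: the outline is reasonable as a strategy, but steps (ii)--(iii) contain the entire difficulty and remain unproved, which is consistent with the paper's decision to leave the statement as a conjecture.
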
 
We can show conjecture \ref{conj:main} for some particular cases. 
For instance, it is trivial for such Bethe states with  one down-spin.  

Assuming conjecture \ref{conj:main} we can show the following theorem. 
\begin{th1} 
Let $q_0$ be a root of unity with $q_0^{2N}=1$ for some integer $N$,  
and $\varphi$ an integer or a half-integer such that 
we have $q_0^{2N \varphi}=1$. 
Every regular Bethe state $|R \rangle$ 
gives a highest weight vector of the Borel subalgebra $U({\bm B}_0)$ 
if it is in sector A: $S^Z \equiv + \varphi$ (mod $N$) 
or $S^Z \equiv - \varphi$ (mod $N$) where $q_0^{2N}=1$,  
or in sector B: $S^Z \equiv N/2 + \varphi$ (mod $N$) or  
$S^Z \equiv N/2 - \varphi$  (mod $N$) where $q_0^{N}=1$ with $N$ odd. 
\label{th:main} 
\end{th1}
The proof of theorem \ref{th:main} will be given in a different paper.

\subsection{Derivation of the degree of nilpotency for  $x_1^{-}$}

Let us assume that a regular Bethe state $|R \rangle$ 
of the twisted XXZ spin chain is 
in such a sector of $S^Z$ where theorem \ref{th:main} holds, 
i.e.  $|R \rangle$ is a highest weight vector of $U({\cal B}_0)$. 
For a highest weight representation 
of $U({\cal B}_0)$ generated by $| R \rangle$, 
it follows from the finite dimensionality 
that  $x_1^{-}$ is nilpotent. 

Let ${\tilde t}_1, {\tilde t}_2, \ldots, {\tilde t}_R$ be 
such a set of regular Bethe roots at a root of unity $q_0$ with 
$q_0^{2N}=1$ that leads to the regular Bethe state $|R \rangle$. 
Let us define  $\eta_0$ by $q=\exp 2\eta_0$. 
We now introduce the following function: 
\be 
Y(v; \varphi) = \sum_{\ell=0}^{N-1} {\frac {q_0^{-\varphi (2 \ell +1)} 
\prod_{j=1}^{L} \sinh(v-(2 \ell +1)\eta_0)}  
{\prod_{j=1}^{R} (\sinh(v-{\tilde t}_j - 2\ell \eta_0) 
\sinh(v-{\tilde t}_j - 2(\ell+1) \eta_0)}} 
\ee 
It follows from the twisted Bethe ansatz equations (\ref{eq:BAE})
that $Y(v; \varphi)$ is a Laurent polynomial of $\exp 2Nv$.  
Here we recall that when $\varphi=0$ it is nothing but the polynomial 
introduced by Fabricius and McCoy for the XXZ spin chain at roots of unity 
under the periodic boundary conditions \cite{Odyssey,RegularXXZ}. 

Making use of the Laurent polynomial $Y(v; \varphi)$, 
we can show the following.  
\begin{prop}
Let $| R \rangle$ be a regular Bethe state 
of the twisted XXZ spin chain at $q_0$ a root of unity with 
$q_0^{2N}=1$ and in a sector of $S^Z$ and with twist parameter 
$\varphi$ such that the conditions of theorem \ref{th:main} hold.  
Then, it is a  highest weight vector of $U({\cal B}_0)$, and     
all the highest weight parameters of $| R \rangle$ are nonzero. 
\label{prop:nonzero}
\end{prop}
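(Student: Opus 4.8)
The plan is to connect the highest weight parameters of the regular Bethe state $|R\rangle$ to the zeros of the Laurent polynomial $Y(v;\varphi)$, and then to show that $Y(v;\varphi)$ does not vanish at the origin (in the variable $\exp 2Nv$) so that none of the highest weight parameters can be zero. First I would recall, as in the periodic case treated by Fabricius--McCoy and in \cite{RegularXXZ}, that the eigenvalue of the operator $(S^{+})^{(m)}(T^{-})^{(n)}$ (or the appropriate combination selecting the sector in theorem \ref{th:main}) on $|R\rangle$ is expressed, via the algebraic Bethe ansatz, in terms of evaluations of $Y(v;\varphi)$ at the $N$ distinct points $v = {\tilde t}_j + 2\ell\eta_0$; more precisely the highest weight polynomial $P_{\lambda}(u)$ coincides, up to an overall nonzero normalization and a change of variable $u \leftrightarrow \exp 2Nv$, with $Y(v;\varphi)$ regarded as a polynomial in that variable. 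Under this identification the highest weight parameters ${\hat a}_j$ are (inverses of) the roots of $Y(v;\varphi)$ in the variable $w=\exp 2Nv$.

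The key step is then the following: a highest weight parameter ${\hat a}_j$ equals zero precisely when $P_{\lambda}(u)$ has degree strictly less than $r$, equivalently when $Y(v;\varphi)$, as a Laurent polynomial in $w=\exp 2Nv$, has a zero at $w=0$ or a pole there lowering the effective degree. So I would analyze the behaviour of $Y(v;\varphi)$ as $\exp 2Nv \to 0$ and $\exp 2Nv \to \infty$, i.e. as ${\rm Re}(v) \to \mp\infty$. Each summand in the definition of $Y(v;\varphi)$ is a ratio of products of $\sinh$'s; as ${\rm Re}(v)\to\pm\infty$ every $\sinh(v-c)$ behaves like $\tfrac12 e^{\pm(v-c)}$, so the numerator scales like $e^{\pm L v}$ and each denominator factor like $e^{\pm 2(v-{\tilde t}_j - \cdots)}$, giving an overall scaling $e^{\pm(L-2R)v}$ for each $\ell$, with $\ell$-dependent constant prefactors. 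Since all Bethe roots ${\tilde t}_j$ are finite (regularity), none of these prefactors degenerates, and one checks that the sum over $\ell$ of the leading coefficients is nonzero — this is where the specific sector condition and the value $q_0^{2N\varphi}=1$ enter, exactly as the analogous non-vanishing is verified in \cite{Odyssey,RegularXXZ} for $\varphi=0$. Hence $Y(v;\varphi)$ has the maximal possible degree and lowest possible order in $w=\exp 2Nv$ consistent with $r = $ the degree of nilpotency of $x_1^{-}$, so $P_{\lambda}(u)$ has degree exactly $r$ and constant term $1$, forcing every ${\hat a}_j \ne 0$.

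I expect the main obstacle to be the bookkeeping that identifies $P_{\lambda}(u)$ with $Y(v;\varphi)$ up to an explicit nonzero factor: this requires knowing the exact eigenvalue of $(x_0^{+})^{(j)}(x_1^{-})^{(j)}$ on $|R\rangle$ in terms of the Bethe data, which in the $U({\cal B}_0)$ setting must be deduced from the realization of $x_0^{+}, x_1^{-}$ as limits of $S^{\pm(N)}, T^{\pm(N)}$-type operators acting on Bethe vectors, together with the commutation relations of $B(z)$ with these operators. Once that dictionary is in place the non-vanishing argument is a relatively routine asymptotic estimate. A secondary subtlety is ensuring that the degree of nilpotency $r$ of $x_1^{-}$ coincides with $\deg P_{\lambda}$ rather than being strictly larger; this follows from proposition \ref{prop:1-dim} and lemma \ref{lem:lambda} once one knows $\lambda_r \ne 0$, which is again the statement that the top coefficient of $Y(v;\varphi)$ in $\exp 2Nv$ does not vanish. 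I would therefore organize the proof as: (i) express $\lambda_j$ through $Y(v;\varphi)$; (ii) read off ${\hat a}_j$ as roots of $Y$ in $w=\exp 2Nv$; (iii) compute the leading and trailing coefficients of $Y$ in $w$ by the $\sinh$-asymptotics above and show both are nonzero using regularity of the Bethe roots and $q_0^{2N\varphi}=1$; (iv) conclude $\deg P_{\lambda}=r$ with nonzero constant term, hence all ${\hat a}_j\ne 0$.
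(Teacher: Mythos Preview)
Your approach is essentially the one the paper itself indicates: the paper does not give a detailed proof of this proposition but only introduces the Laurent polynomial $Y(v;\varphi)$ and states that ``making use of the Laurent polynomial $Y(v;\varphi)$, we can show the following,'' deferring the details (as it does for theorem \ref{th:main}) to another paper. Your plan --- identify $P_{\lambda}(u)$ with $Y(v;\varphi)$ under the change of variable $w=\exp 2Nv$, then check the extremal coefficients of $Y$ in $w$ via the $\sinh$-asymptotics as ${\rm Re}(v)\to\pm\infty$ --- is exactly the twisted analogue of the argument in \cite{Odyssey,RegularXXZ} for $\varphi=0$, which is precisely what the paper points to.

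Two remarks on the outline. First, the interpretive point you flag at the end is the right one: since $P_{\lambda}(0)=\lambda_0=1$ automatically, the content of ``all ${\hat a}_j\neq 0$'' is exactly $\lambda_r\neq 0$, i.e.\ that $P_{\lambda}$ has degree equal to the nilpotency degree $r$; so step (iii) really reduces to the non-vanishing of the top coefficient of $Y$ in $w$, and the ``trailing coefficient'' part is not a separate condition. Second, the step you correctly identify as the main obstacle --- the dictionary between $\lambda_j$ and the coefficients of $Y(v;\varphi)$ --- is genuinely nontrivial in the twisted setting and is not proved in this paper either; it rests on the computations of \cite{RegularXXZ} adapted to the twisted transfer matrix, and on the highest-weight property of $|R\rangle$ asserted in theorem \ref{th:main}, whose proof is likewise deferred. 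So your proposal is at the same level of completeness as the paper's own treatment, and follows the same route.
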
 
 
Let us consider a regular Bethe state 
$| R \rangle$ in such a sector of $S^Z$ where  
we have $S^Z \pm \varphi \equiv 0$ (mod $N$).  
It follows from theorem \ref{th:new} and 
proposition \ref{prop:nonzero} that the 
highest weight representation generated by any given 
regular Bethe state in a sector of $S^Z \pm \varphi \equiv 0$ (mod $N$)  
extends to a highest weight representation of the $sl_2$ loop algebra.  
Thus,  the Borel subalgebra symmetry 
of the twisted XXZ spin chain is extended to 
the $sl_2$ loop algebra symmetry.

\subsection{Conjecture of the $sl_2$ loop algebra symmetry of the twisted transfer matrix}
 
We now  present the following conjecture: 
\begin{conj}
The twisted XXZ spin chain 
at roots of unity with twist parameter $\varphi$ being integers 
should have the $sl_2$ loop algebra symmetry in every sector 
of $S^Z$ mod $N$. 
\end{conj}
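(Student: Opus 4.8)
The plan is to reduce the conjecture to the algebraic extension theorem (Theorem \ref{th:new}) together with the symmetry and highest-weight facts assembled in Section 2 and in Theorem \ref{th:main}. Fix a root of unity $q_0$ with $q_0^{2N}=1$, an integer twist parameter $\varphi$, and a sector $S^Z \equiv c$ (mod $N$). The first step is to show that the operators displayed in Section 2.3 --- the products $(S^{+})^{(m)}(T^{-})^{(n)}$, $(T^{-})^{(m)}(S^{+})^{(n)}$ and their $\pm$ conjugates with $|m-n|=kN$ --- when restricted to this sector, generate a copy of the Borel subalgebra $U({\cal B}_0)$ inside the algebra of operators acting on the sector. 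Concretely, one identifies suitable combinations of these operators with the generators $x_k^{+}, h_k$ ($k\ge 0$) and $x_k^{-}$ ($k\ge 1$), and verifies the defining relations (\ref{eq:dfrBorel}) by direct commutator computation using the explicit form (\ref{sn}) of $(S^{\pm})^{(m)}$. Korff has carried this out in particular sectors \cite{Korff}; the task here is to do it uniformly for every residue $c$ mod $N$.

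Second, I would install the highest-weight structure. Granting Conjecture \ref{conj:main}, Theorem \ref{th:main} asserts that every regular Bethe state $|R\rangle$ in the relevant sectors is a highest weight vector for $U({\cal B}_0)$, and Proposition \ref{prop:nonzero} guarantees that its highest weight parameters $\hat{a}_j$ are all nonzero. This is precisely the hypothesis of Theorem \ref{th:new}, so the $U({\cal B}_0)$ action on the highest weight representation $U({\cal B}_0)|R\rangle$ extends to an action of the full $sl_2$ loop algebra. The extension is explicit: Theorem \ref{th:new} defines the missing generators $x_0^{-}$, $x_{-\ell}^{\pm}$ and $h_{-\ell}$ as fixed linear combinations (with coefficients $\bar{\lambda}_j$) of the positive-index Borel generators already present on that module.

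Third --- the decisive point --- I would verify that these extended generators again commute or anti-commute with $\tau(z;\varphi)$. Each positive-index generator is built from the Section 2.3 operators, hence commutes with $\tau$ up to a scalar $q_0^{m-n}=(\pm 1)^k$. Because the linear combinations of Theorem \ref{th:new} mix only generators whose indices differ by amounts that, for integer $\varphi$, leave this scalar unchanged, the negative-index generators $x_{-\ell}^{\pm}, h_{-\ell}$ inherit a single well-defined commutation sign with $\tau$. Thus the entire loop algebra commutes (or anti-commutes) with the transfer matrix on each $U({\cal B}_0)|R\rangle$, and ranging over all regular Bethe states then yields the loop algebra symmetry on the whole sector.

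I expect the obstacle to be twofold. Algebraically, the subtle point in Step 3 is to confirm that the commutation sign $q_0^{m-n}$ is genuinely constant across the index range $j=1,\dots,r$ entering $x_{-\ell}^{-}=\sum_{j=1}^{r}(-1)^{j-1}\bar{\lambda}_{j-1}x_{j-\ell}^{-}$; this rests on all the operators $(S^{-})^{(n)}$, $(T^{+})^{(n)}$ appearing in a fixed sector sharing the same sign, which must be checked case by case against Definition \ref{df:roots}. At the level of the physical statement, however, the real gap is completeness: obtaining the symmetry on the \emph{entire} sector requires that every state lie in a highest weight representation generated by a regular Bethe state, i.e. the completeness of the regular Bethe ansatz together with control of singular and coincident-root solutions. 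This --- and the sectors not yet covered by Theorem \ref{th:main} (for instance sector B when $N$ is even) --- is where the statement genuinely remains a conjecture rather than a theorem.
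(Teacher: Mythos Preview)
The statement is explicitly presented as a conjecture in the paper: Section 4.3 simply states it without proof, as the culmination of the evidence assembled in Sections 2--4. There is therefore no proof in the paper against which to compare your attempt.

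Your outline faithfully reconstructs the paper's own chain of partial evidence --- the commuting operators of Section 2.3, Korff's observation that they generate $U({\cal B}_0)$ in certain sectors, Theorem \ref{th:main} and Proposition \ref{prop:nonzero} supplying the highest-weight structure with nonzero parameters, and finally the extension Theorem \ref{th:new} --- and you correctly isolate the genuine obstructions: uniformity of Step 1 across all residues $c$ mod $N$, completeness of the regular Bethe ansatz, and the sectors not covered by Theorem \ref{th:main}. These are exactly why the paper leaves the statement conjectural.

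One point in your Step 3 deserves sharpening. The coefficients $\bar{\lambda}_j$ in Theorem \ref{th:new} are determined by the highest weight parameters $\hat{a}_j$ of the \emph{particular} Bethe state $|R\rangle$, so the extended generators $x_{-\ell}^{\pm}$, $h_{-\ell}$ are defined representation by representation rather than as global operators on the sector. Commutation with $\tau(z;\varphi)$ then holds automatically on each module $U({\cal B}_0)|R\rangle$ simply because the Borel generators already commute there and the extended ones are linear combinations of them; your sign-matching argument is not what does the work. But this mechanism does not, by itself, yield globally defined loop-algebra generators commuting with $\tau$ across the whole sector, since the $\bar{\lambda}_j$ vary from one Bethe state to another. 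The conjecture, read literally as a global symmetry statement analogous to the periodic case (1a) in Section 2.4, would require something stronger --- presumably that the composite operators listed in Section 2.3 already close into the full loop algebra as global operators, which is the content of your Step 1 and is not supplied by Theorem \ref{th:new}.
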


\section{Representations of the Onsager algebra derived from highest weight 
representations of $U(L(sl_2))$}

The Onsager algebra is generated by operators 
$A_{m}$ and $G_{\ell}$ ($\ell, m=0, \pm 1, \pm2, \ldots$)  
satisfyng the  following defining relations 
\cite{OnsagerI,Dolan-Grady,Davies,Roan,Uglov-Ivanov,Date-Roan}: 
\bea 
[A_{\ell}, A_{m} ] & = & 4 G_{\ell-m} \, , \non \\ 
{[} G_{\ell}, A_m {]} & = & 2 A_{m+\ell} - 2 A_{m - \ell} \, , \non \\  
{[} G_{\ell}, G_m {]} & = & 0  \, . \label{eq:OA} 
\eea
We remark that Davies has shown that 
if generators $A_n$ satisfy a linear recurrence relation 
\be 
\sum_{k=-n}^{n} \gamma_k A_{k-n} = 0 \,, 
\ee
then they are expressed in terms of the generators of $sl_2$  as follows 
\cite{Davies}: 
\bea 
A_m & = & 2 \sum_{j=1}^{n}  
\left(e_j^{+} \otimes z_j^m + e_j^{-} \otimes z_j^{-m} \right) \, ,
 \, \non \\
G_k & = & \sum_{j=1}^{n}  
\left( h_j \otimes z_j^k - h_j \otimes z_j^{-k} \right)
\eea
Here $e_j^{\pm}$ and $h_k$ satisfy 
\be 
[h_j, e_k^{\pm}] = \pm 2 e_j^{\pm} \, \delta_{jk} \, , \quad 
[e_j^{+}, e_k^{-}] = h_j \, \delta_{j,k}
\ee

Let ${\bm \alpha}$ denote a finite sequence of complex parameters such as 
${\bm {\alpha}} =(\alpha_1, \alpha_2, \ldots, \alpha_{\ell})$. 
Similarly as (\ref{eq:def-alpha}), 
we define generators with $\ell$ parameters, $A_m({\bm \al})$ and 
$G_m({\bm \al})$, as follows: 
\begin{eqnarray} 
A_{m}({\bm \al}) & = & 
\sum_{k=0}^{\ell} (-1)^k A_{m-k} 
\sum_{\{ i_1, \ldots, i_k \} \subset \{1, \ldots, \ell \}}  
\alpha_{i_1} \alpha_{i_2} \cdots \alpha_{i_k} \, , 
\non \\
G_{m}({\bm \al}) & = & 
\sum_{k=0}^{\ell} (-1)^k G_{m-k} 
\sum_{\{ i_1, \ldots, i_k \} \subset \{1, \ldots, \ell \}}  
\alpha_{i_1} \alpha_{i_2} \cdots \alpha_{i_k} \, . 
\label{eq:def-alpha-AG}
\end{eqnarray}

Let $\Omega$ be a highest weight vector in a finite-dimensional 
representation of the $sl_2$ loop algebra, $U(L(sl_2))$. 
We define operators $A_m$ and $G_k$ in terms of 
generators $x_m^{\pm}$ and $h_k$ of $U(L(sl_2))$ by 
\be 
 A_m = x_m^{+} + x_m^{-} \, , \quad G_k= h_k - h_{-k} \, . 
\ee
Then, operators $A_m$ and $G_k$ 
satisfy the defining relations of the Onsager algebra. 
Furthermore, we can show recurrence relations of $A_m$'s.  
\begin{prop} 
Let $\Omega$ be a highest weight vector in a finite-dimensional 
representation of the $sl_2$ loop algebra, $U(L(sl_2))$. 
If generators $x_m^{-}$ of $U(L(sl_2))$ satisfy a recurrence relation,  
$x_n^{-}({\bm \beta})\Omega = 0$, for 
a sequence of nonzero parameters ${\bm \beta}=(\beta_1, \beta_2, \ldots, 
\beta_n)$, then we have 
\be 
A_n({\bm \beta}; {\bm {\bar \beta}}) = 0 . 
\ee
Here ${\bm {\bar \beta}}$ denotes ${\bm {\bar \beta}}
 = (\beta_1^{-1}, \beta_2^{-1}, \ldots, \beta_n^{-1})$ . 
\label{prop:recur}
\end{prop}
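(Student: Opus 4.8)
The plan is to prove that, as operators on the whole module $U(L(sl_2))\Omega$, all the parametrized generators $x_m^{+}(\bm{\beta})$, $x_m^{-}(\bm{\beta})$ and $h_m(\bm{\beta})$ vanish for every $m\in{\bf Z}$; the conclusion then follows at once, since $A_n(\bm{\beta};\bm{\bar\beta})$ is a fixed linear combination of these. For a sequence $\bm{\gamma}=(\gamma_1,\ldots,\gamma_p)$ write $e_k(\bm{\gamma})=\sum_{i_1<\cdots<i_k}\gamma_{i_1}\cdots\gamma_{i_k}$, so that (\ref{eq:def-alpha}) and (\ref{eq:def-alpha-AG}) read $x_m^{\pm}(\bm{\gamma})=\sum_k(-1)^k e_k(\bm{\gamma})\,x_{m-k}^{\pm}$ and $A_m(\bm{\gamma})=\sum_k(-1)^k e_k(\bm{\gamma})\,A_{m-k}$. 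The one algebraic identity used throughout is the composition rule for parameter insertion: inserting $\bm{\alpha}$ and then $\bm{\gamma}$ is the same as inserting the composite sequence, so $x_m^{\pm}(\bm{\alpha};\bm{\gamma})=\sum_j(-1)^j e_j(\bm{\gamma})\,x_{m-j}^{\pm}(\bm{\alpha})$, and likewise for $A_m$ (equivalently, insertion multiplies the formal series $\sum_m x_m^{\pm}u^m$ by $\prod_i(1-\alpha_iu)$). Since $A_m=x_m^{+}+x_m^{-}$, we have $A_n(\bm{\beta};\bm{\bar\beta})=x_n^{+}(\bm{\beta};\bm{\bar\beta})+x_n^{-}(\bm{\beta};\bm{\bar\beta})=\sum_j(-1)^j e_j(\bm{\bar\beta})\big(x_{n-j}^{+}(\bm{\beta})+x_{n-j}^{-}(\bm{\beta})\big)$, so once the $x_m^{\pm}(\bm{\beta})$ are known to vanish the statement is immediate; the sequence $\bm{\bar\beta}$ merely makes the associated recurrence polynomial $\prod_i(1-\beta_iu)(1-\beta_i^{-1}u)$ reciprocal, the form required by Davies' construction \cite{Davies}.

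First I would propagate the hypothesis $x_n^{-}(\bm{\beta})\Omega=0$ to $x_m^{-}(\bm{\beta})\Omega=0$ for all $m\in{\bf Z}$: from the parametrized relation $[h_{\ell},x_m^{-}(\bm{\beta})]=-2\,x_{\ell+m}^{-}(\bm{\beta})$ (valid for all $\ell,m\in{\bf Z}$ in $U(L(sl_2))$) with $\ell=\pm1$ and the highest-weight property $h_{\pm1}\Omega=d_{\pm1}\Omega$, one gets $x_{n\pm1}^{-}(\bm{\beta})\Omega=\tfrac12 d_{\pm1}\,x_n^{-}(\bm{\beta})\Omega=0$, and induction in both directions finishes this step. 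Because every vector of $U(L(sl_2))\Omega$ is a linear combination of monomials $x_{k_1}^{-}\cdots x_{k_j}^{-}\Omega$ and all $x_a^{-}$ commute with one another, the operator $x_m^{-}(\bm{\beta})$ commutes through each such monomial and annihilates $\Omega$, hence $x_m^{-}(\bm{\beta})=0$ on the whole module for every $m$. Next, $[x_0^{+},x_m^{-}(\bm{\beta})]=h_m(\bm{\beta})$ together with $x_0^{+}\Omega=0$ and the previous step give $h_m(\bm{\beta})\Omega=0$; then $[h_m(\bm{\beta}),x_k^{-}]=-2x_{m+k}^{-}(\bm{\beta})=0$ as operators, so $h_m(\bm{\beta})$ commutes through every $x^{-}$-monomial and thus $h_m(\bm{\beta})=0$ on the whole module. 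Finally $x_m^{+}(\bm{\beta})\Omega=0$ holds trivially by highest weight, and since $[x_m^{+}(\bm{\beta}),x_k^{-}]=h_{m+k}(\bm{\beta})=0$ as operators, $x_m^{+}(\bm{\beta})$ commutes through any $x^{-}$-monomial, whence $x_m^{+}(\bm{\beta})=0$ on the whole module. This gives $A_n(\bm{\beta};\bm{\bar\beta})=0$.

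The main obstacle is less a matter of depth than of care. One must make sure the parametrized defining relations of $U(L(sl_2))$ are available for all integer mode indices — the Borel versions recorded in (\ref{eq:dfr-AB}) cover only $\ell\ge n$, $m>p$ — and one must run the three steps in the order $x^{-}\!\to h\to x^{+}$, exactly as in Lemma \ref{lem:x+hx-} and Corollary \ref{cor:rr}: passing from ``annihilates $\Omega$'' to ``vanishes as an operator'' for $x_m^{+}(\bm{\beta})$ genuinely needs $h_m(\bm{\beta})=0$ beforehand, because the module is spanned by $x^{-}$-monomials and commuting $x^{+}(\bm{\beta})$ past them produces $h(\bm{\beta})$'s. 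Everything else is the elementary-symmetric-function bookkeeping already carried out for the Borel subalgebra in Section 3.
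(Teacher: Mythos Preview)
Your argument is correct. The paper states Proposition~\ref{prop:recur} without proof, so there is nothing to compare against directly; your route is the natural loop-algebra extension of the argument sketched for the Borel subalgebra in Lemma~\ref{lem:x+hx-} and Corollary~\ref{cor:rr}: from $x_n^{-}(\bm\beta)\Omega=0$ propagate to $x_m^{-}(\bm\beta)=0$ on all of $U(L(sl_2))\Omega$ for every $m\in{\bf Z}$ (using that in the full loop algebra the relation $[h_{\ell},x_m^{-}(\bm\beta)]=-2\,x_{\ell+m}^{-}(\bm\beta)$ is available for all $\ell$, not just $\ell\ge 0$), then lift to $h_m(\bm\beta)=0$ and $x_m^{+}(\bm\beta)=0$ in the order $x^{-}\to h\to x^{+}$, and finally expand $A_n(\bm\beta;\bm{\bar\beta})$ as a linear combination of the $x_{n-j}^{\pm}(\bm\beta)$. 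This is exactly the expected proof.

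One cosmetic slip: the displayed intermediate equality $x_{n\pm1}^{-}(\bm\beta)\Omega=\tfrac12\,d_{\pm1}\,x_n^{-}(\bm\beta)\Omega$ is not literally what the commutator yields --- both terms of $-\tfrac12[h_{\pm1},x_n^{-}(\bm\beta)]\Omega$ already vanish separately because $x_n^{-}(\bm\beta)\Omega=0$ --- but the conclusion $x_{n\pm1}^{-}(\bm\beta)\Omega=0$ is unaffected.
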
 

In Ref. \cite{ND}, the $sl_2$ loop algebra symmetry is derived for 
a spin-$N/2$ fusion model of the six-vertex model 
at $q_0$ being an $N$th root of unity,  
which is associated with the superintegrable chiral Potts model.   
From the representations of the $sl_2$ loop algebra 
derived from the fusion model, we can thus construct 
representations of the Onsager algebra. Then, through proposition 
\ref{prop:recur}, we derive recurrence relations for generators 
of the Onsager algebra. It should thus be an interesting problem to   
discuss connections to the 
Onsager algebra symmetry of the $Z_N$ symmetric Hamiltonian 
given by von Gehlen and Rittenberg. 
We shall discuss them elsewhere. 

\section*{Acknowledgement} 
The author would like to thank the organizers of 
the workshop RAQIS, Sep. 11-14, 2007, LAPTH, Annecy France,  
for giving him the opportunity 
to participate it. He would also like to thank 
Prof. M. Jimbo, Prof. B.M. McCoy and Dr. A. Nishino for helpful comments. 
He is quite grateful to Prof. P. Baseilhac for useful discussion on 
the Onsager algebra during the stay in Tours in September 2006. 
This work is partially supported by 
Grant-in-Aid for Scientific Research (C) No. 17540351.

\end{document}